\newtheorem{theorem}{Theorem}[section]
\newtheorem{example}[theorem]{Example}
\newtheorem{proposition}[theorem]{Proposition}
\newcommand{\cc}{\bm{c}}
\newcommand{\xx}{\bm{x}}
\newcommand{\yy}{\bm{y}}
\begin{document}

\title{Quasi-perfect codes in the $\ell_p$ metric \footnote{This work was partially supported by FAPESP, grants 2014/20602-8, 2013/25977-7 and CNPq, grant 312926/2013-8.}}
 
\author{Jo\~{a}o E. Strapasson, Grasiele C. Jorge, \\ Antonio Campello and Sueli I. R. Costa \footnote{\textit{address:} [Strapasson] UNICAMP - University of Campinas, 13484-350,  Limeira, SP, Brazil, [Jorge] UNIFESP - Federal University of S\~{a}o Paulo, 12231-280, S\~{a}o Jos\'{e} dos Campos, SP, Brazil and [Campello-Costa] UNICAMP - University of Campinas, 13083-859, Campinas, SP, Brazil} \footnote{\textit{email:} joao.strapasson@fca.unicamp.br, grasiele.jorge@unifesp.br, campello@ime.unicamp.br and sueli@ime.unicamp.br}}

\maketitle

----------------------------------------------------------------------

\begin{abstract} We consider quasi-perfect codes in $\mathbb{Z}^n$ over the $\ell_p$ metric, $2 \leq p < \infty$. Through a computational approach, we determine all radii for which there are linear quasi-perfect codes for $p = 2$ and $n = 2, 3$. Moreover, we study codes with a certain \textit{degree of imperfection}, a notion that generalizes the quasi-perfect codes. Numerical results concerning the codes with the smallest degree of imperfection are presented.   \end{abstract}

----------------------------------------------------------------------

{\bf keywords:}{Tilings, Lattices, Quasi-perfect Codes, $\ell_p$ metric}

\section{Introduction}

\hspace{0.5cm} A collection of disjoint translates of a set  $\mathcal{S} \subseteq \mathbb{Z}^{n}$ is called a \textit{tiling} of $\mathbb{Z}^n$ if the union of its elements is equal to $\mathbb{Z}^n$. We consider here $S = B_p^{n}(r)$ as the ball in $\mathbb{Z}^{n}$ with radius $r>0$ in the $\ell_p$ metric for $1 \leq p \leq \infty$. The set  $\mathscr{C} \subseteq \mathbb{Z}^{n}$ associated to the translations of such a tiling is also called a \textit{perfect code} in the $\ell_p$ metric, $1 \leq p \leq \infty$. If, in addition, this set is an additive subgroup $\Lambda$ of $\mathbb{Z}^n$, we call the corresponding tiling a \textit{lattice tiling}, and the corresponding code a \textit{linear perfect code}. 

For $p = 1$, the existence of  tilings by balls in the $\ell_1$ metric was investigated by Golomb and Welch in their seminal paper \cite{Golomb}. The so-called Golomb-Welch conjecture states that there are no tilings of $\mathbb{Z}^{n}$ by $B_1^{n}(r)$ for $n \geq 3$ and radius $r \geq 2$. Although there have been many advances and partial results towards a proof of this conjecture it still remains open (see \cite{Horak} for further references).


The existence of perfect codes  $\mathscr{C} \subseteq \mathbb{Z}^{n}$ in the $\ell_p$ metric, $2 \leq p \leq \infty$, with parameters $(n,r,p)$, where $n$ is the dimension and $r$ is the packing radius, was investigated in \cite{CNMAC2014,PerfectpLee}. It was shown that for $n=2,3$ and $p=2$ there are linear perfect codes only for the parameters $(2,r,2)$ and $r = 1, \sqrt{2}, 2, 2\sqrt{2}$ and $(3, r, 2)$ and $r = 1, \sqrt{3}$ \cite[Theorem 6.2 and 6.4]{PerfectpLee}. It was also shown that for $n=2$ and $r$ integer there are no perfect (linear and nonlinear) codes in the $\ell_p$ metric if $r>2$ and $2  \leq p < \infty$ \cite[Theorem 7.2]{PerfectpLee}. 

In view of the rarity of perfect codes in the $\ell_p$ metric, $2\leq p < \infty$, we relax the condition of being perfect by considering  quasi-perfect codes in the $\ell_p$ metric for $2 \leq p < \infty$ and by introducing the notion of \textit{degree of imperfection} of a code. Quasi-perfect codes in the Lee metric ($p=1$) have already been investigated in some papers. In \cite{AIBdaiwi-Bose} it was presented
quasi-perfect codes in the Lee metric for dimension $n = 2$. In 
\cite{Horak} the authors presented quasi-perfect codes
for $n = 3$ and a few radii. Later, in \cite{Camarero-Martinez} the authors constructed a family of quasi-perfect codes in the Lee metric of radius $2$ and arbitrarily large dimension. 

When dealing with quasi-perfect codes a natural question to be considered is the existane such codes for different $n$ and $p$, $2 \leq p < \infty$, for a given radius. In this paper we give partial answers for this question in some dimensions. Some preliminary results of this work were presented in \cite{CNMAC2014}. 

\subsection{Organization}

The paper is organized as follows. In Section \ref{CodesandLattices} we establish our notation and present some preliminary results concerning codes and lattices. In Section \ref{polio} a polyomino associated to a ball in the $\ell_p$ metric is considered.  In Section \ref{Quasi-Perfect} the notion of degree of imperfection of a code in the $\ell_p$ metric is introduced. In Section \ref{ExplicitConstruction} some families of lattices whose degree of imperfection is greater than $1$ are presented. 
Finally, in Section \ref{Algorithm} an algorithm  that searches for quasi-perfect codes in dimensions $2$ and $3$ is presented and  all quasi-perfect codes in dimension $2$ and $3$ for $p=2$ are listed.

\subsection{Relation to codes over finite alphabets}

In the classical literature, codes are considered over finite alphabets, for instance as subgroups of $\mathbb{F}_q^n$, where $\mathbb{F}_q$ is a field with $q$ elements, or as subgroups of $\mathbb{Z}_q^n$, where $\mathbb{Z}_q$ is the ring of integers modulo $q$. 

Codes in $\mathbb{Z}_q^n$ can be ``lifted'' to codes in $\mathbb{Z}^n$ via the $q$-ary Construction A \cite{SloaneLivro}. If the alphabet size $q$ is large enough, then perfect linear codes in $\mathbb{Z}_q^n$ in the $p$-Lee metric \cite{PerfectpLee} induce perfect linear codes in $\mathbb{Z}^n$ in the $\ell_p$ metric, as shown in \cite[Corollary 3.5]{PerfectpLee}. Conversely, proofs of non-existence of perfect codes in $\mathbb{Z}^n$ automatically imply the non-existence of codes in $\mathbb{Z}_q^n$ under certain conditions (for more precise definitions see \cite[Corollary 3.5]{PerfectpLee}). This relation, that dates back to Golomb and Welch \cite{Golomb} for the $\ell_1$ metric, motivates the study of codes over the alphabet $\mathbb{Z}$. It also justifies the terminology ``perfect'' and ``quasi-perfect'' codes (which, nonetheless, follows the terminology of \cite{Horak, AIBdaiwi-Bose}, etc.).
 
\section{Codes in the $\ell_p$ metric}\label{CodesandLattices}

A linear code, to our purposes, is an additive subgroup of $\mathbb{Z}^n$ (or a \textit{lattice}). We consider here full rank lattices in $\mathbb{Z}^n$, that is, full rank additive subgroups of $\mathbb{Z}^n$. A lattice $\Lambda$ always has a \textit{generator matrix} $B$, i.e., a full rank matrix such that $\Lambda = \left\{ \bm{x} B : \bm{x} \in \mathbb{Z}^n \right\}$. The {\it determinant} of a lattice is defined as $\det \Lambda = |\det B|$ for any generator matrix.

Recall that the $\ell_p$ distance between two points $\bm{x},\bm{y} \in \mathbb{Z}^n$ is defined as
\begin{equation} \label{dp} d_{p}({\bm x},{\bm y}) :=
\left(\displaystyle\sum_{i=1}^n|x_i-y_i|^{p}\right)^{1/p} \mbox{ if } 1 \leq p < \infty \end{equation}
and $d_{\infty}({\bm x},{\bm y}) := \max\left\{|x_i-y_i|; \,\, i=1,\ldots,n\right\}.$

The \textit{minimum distance} $d_p(\mathscr{C})$ of a code $\mathscr{C}$ in $\mathbb{Z}^{n}$ is defined as $$d_{p}(\mathscr{C}) = \min_{{\bm{\xx}},{\bm{\yy}} \in \mathscr{C}} d_{p}({\bm{\xx}},{\bm{\yy}}).$$ The minimum distance of a lattice $\Lambda$ in the $\ell_p$ metric, $d_p(\Lambda)$, is the  shortest nonzero vector in the $\ell_p$ metric.

Two lattices $\Lambda_1$ and $\Lambda_2$ in $\mathbb{Z}^{n}$ are congruence in the $\ell_p$ metric if $\Lambda_1$ can be obtained from $\Lambda_2$ by  permutation of coordinates composed with  sign changes.

\section{Balls in the $\ell_p$ metric and associated polyominoes} \label{polio}

In what follows $B_p^n(\bm{x},r)$ will be used for the closed ball in $\mathbb{Z}^n$ centered at $\bm{x}=(x_1,\ldots,x_n)$ with radius $r$, i.e., $B_{p}^n(\bm{x},r)= \left\{ (z_1, \ldots, z_n) \in \mathbb{Z}^n:\right.$ \linebreak 
$\left.|z_1-x_1|^p + \ldots + |z_n-x_n|^p \leq r^p \right\}$ for $1 \leq p < \infty$ and $B_{\infty}^n(\bm{x},r)=$ \linebreak $\{ (z_1, \ldots, z_n) \in \mathbb{Z}^n : \max\{|z_1-x_1|,\ldots,|z_n-x_n|\} \leq r \}.$ When ${\bm x} = {\bm 0}$ we will denote $B_{p}^n(\bm{0},r) = B_{p}^n(r)$ for $1 \leq p < \infty$ and $B_{\infty}^n(\bm{0},r) = B_{\infty}^n(r).$

Let $\mu(n,p,r)$ be the cardinality of the set $B_p^{n}(r) \cap \mathbb{Z}^{n}$. There is no closed form for $\mu(n,p,r)$ when $p \neq 1$ and $p \neq \infty$.

By considering the union of unit cubes in $\mathbb{R}^n$ centered at the points of $B_p^n(r)$, $1\leq p \leq \infty$, a shape called a \textit{polyomino} is produced. A tiling of $\mathbb{Z}^n$ by translates of $B_p^n(r)$ corresponds to a tiling of $\mathbb{R}^n$ by the associated polyominoes. We use the notation
\begin{equation} T_p^{n}(r) := \bigcup_{\xx \in B_p^n(r)}{\left( \xx + \left[\frac{-1}{2},\frac{1}{2}\right]^n\right)}, \,\, 1 \leq p \leq \infty,
\label{eq:defPol}
\end{equation}
for the polyomino in the $\ell_p$ metric associated to $B_p^n(r)$. Some polyominoes are depicted in Figure \ref{fig:poliominos}.

\begin{figure}[h!]
	\begin{minipage}[b]{0.15\linewidth}
		\centering
		\includegraphics[scale=0.17]{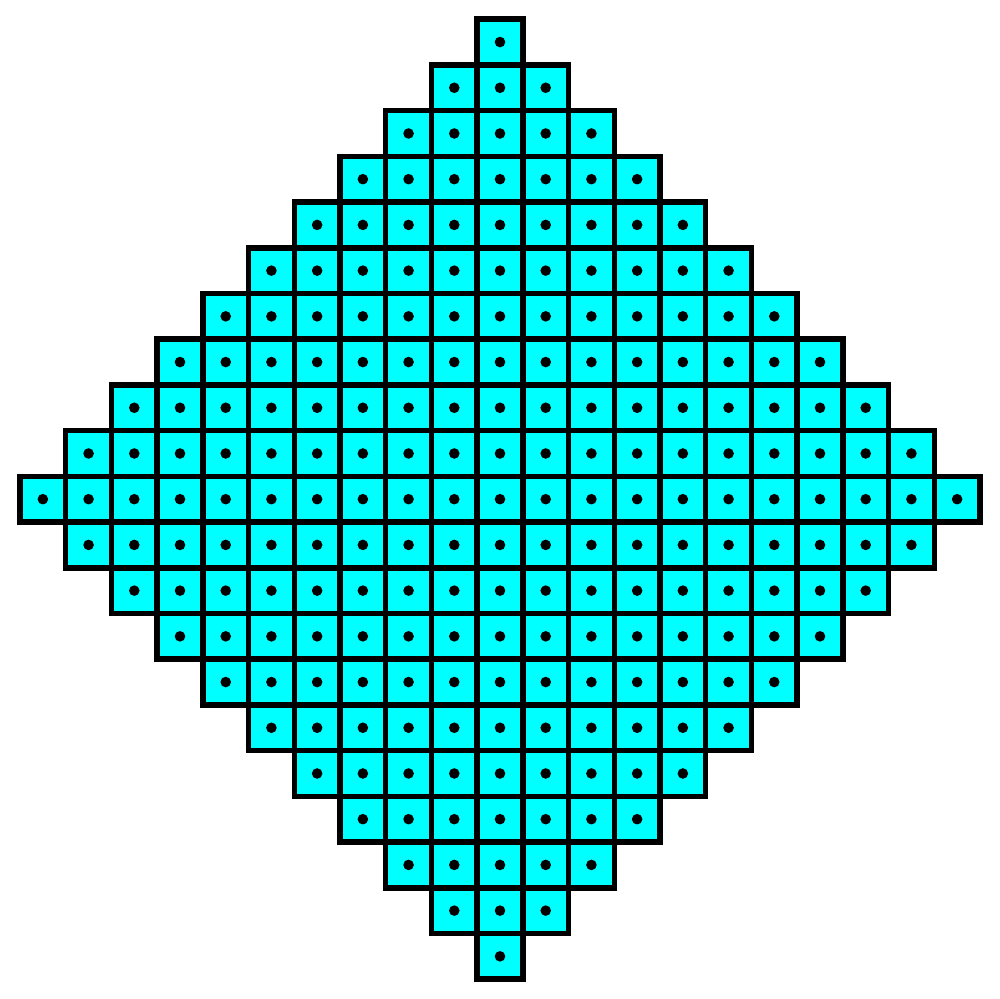}
	\end{minipage}
	\begin{minipage}[b]{0.15\linewidth}
		\centering
		\includegraphics[scale=0.17]{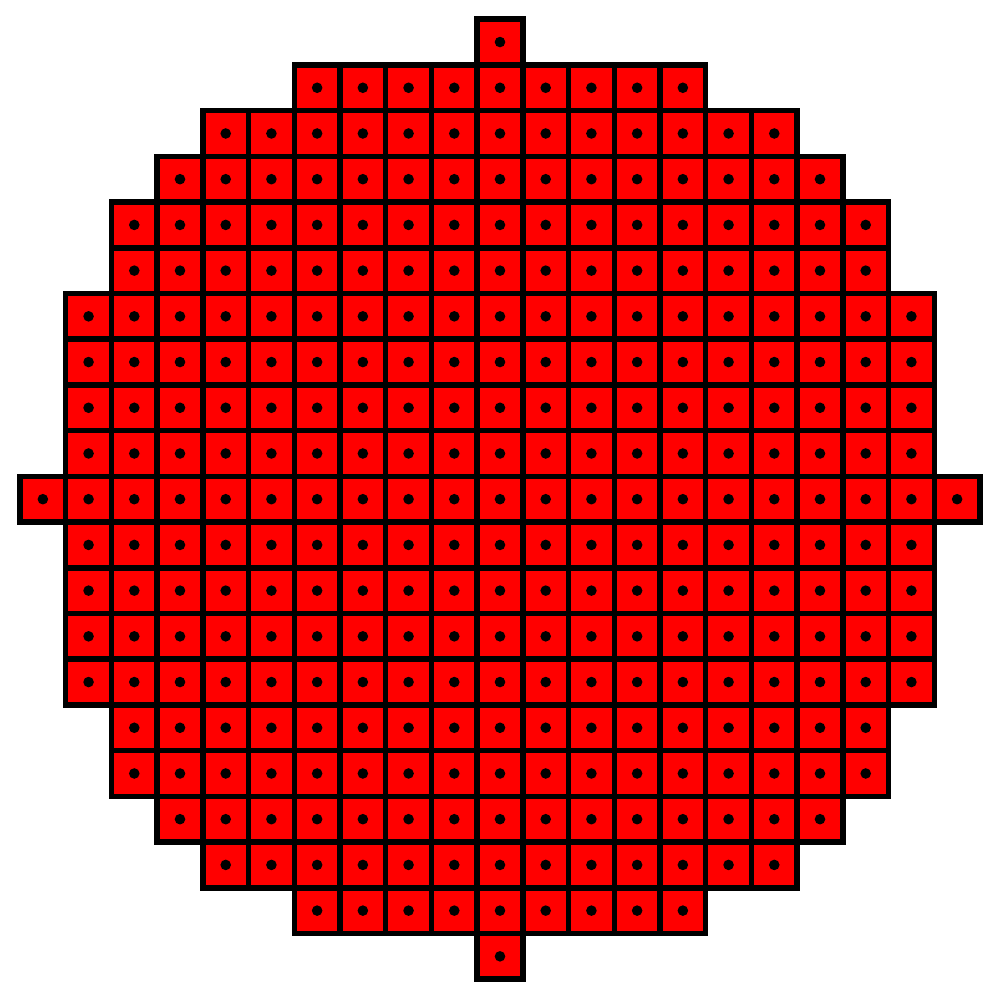}
	\end{minipage}
	\begin{minipage}[b]{0.15\linewidth}
		\centering
		\includegraphics[scale=0.17]{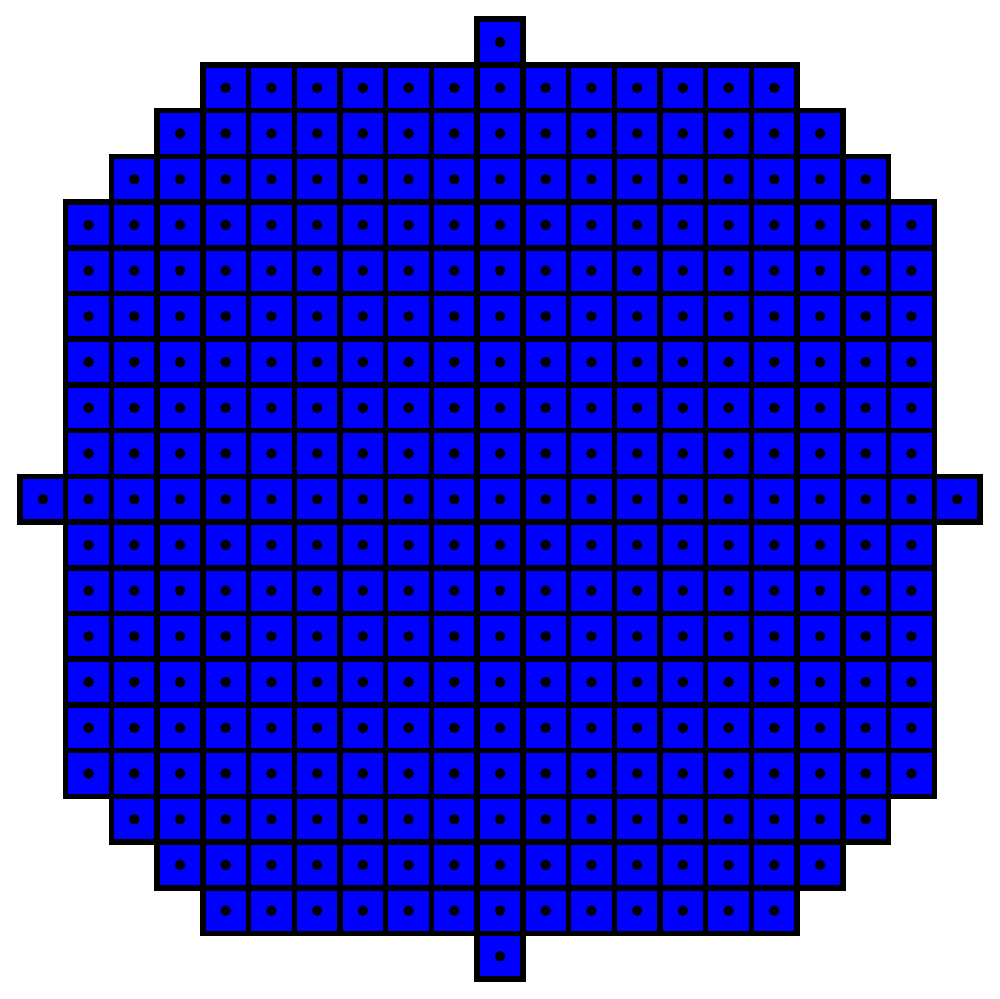}
	\end{minipage}
	\begin{minipage}[b]{0.15\linewidth}
		\centering
		\includegraphics[scale=0.17]{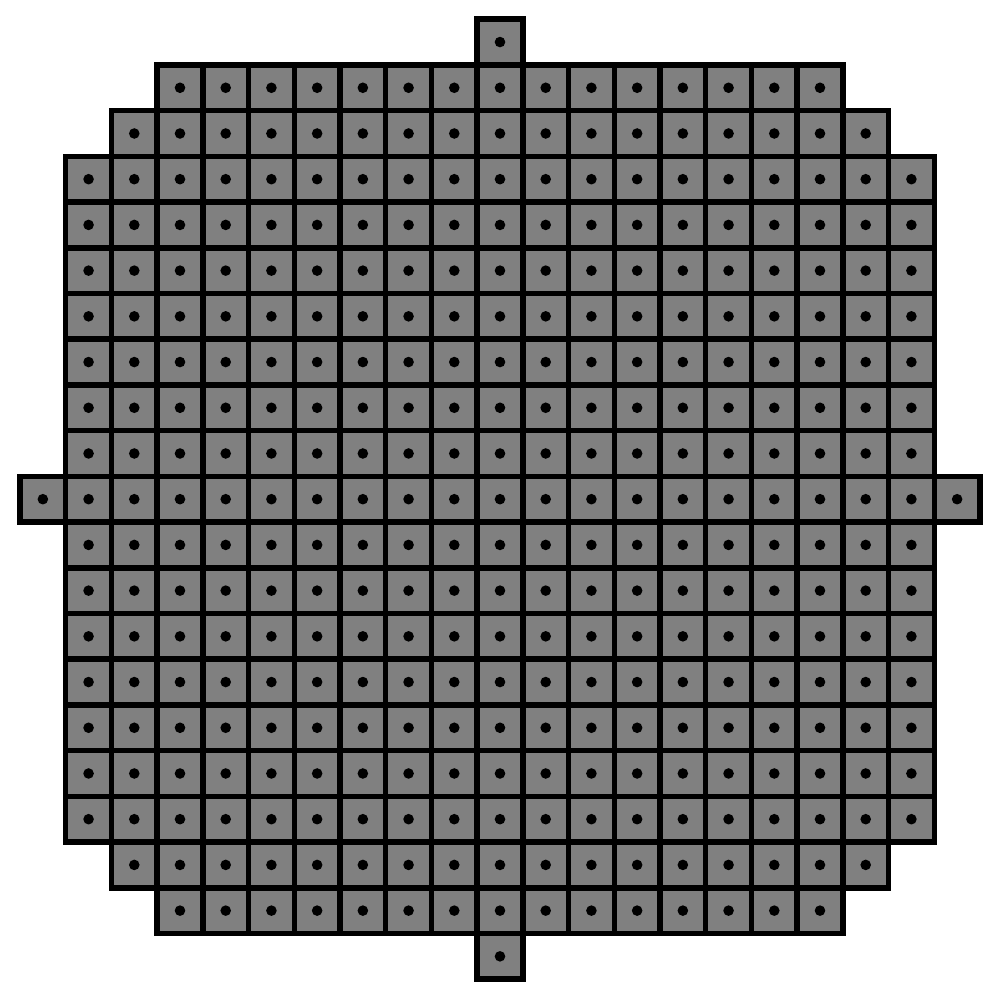}
	\end{minipage}
	\begin{minipage}[b]{0.15\linewidth}
		\centering
		\includegraphics[scale=0.17]{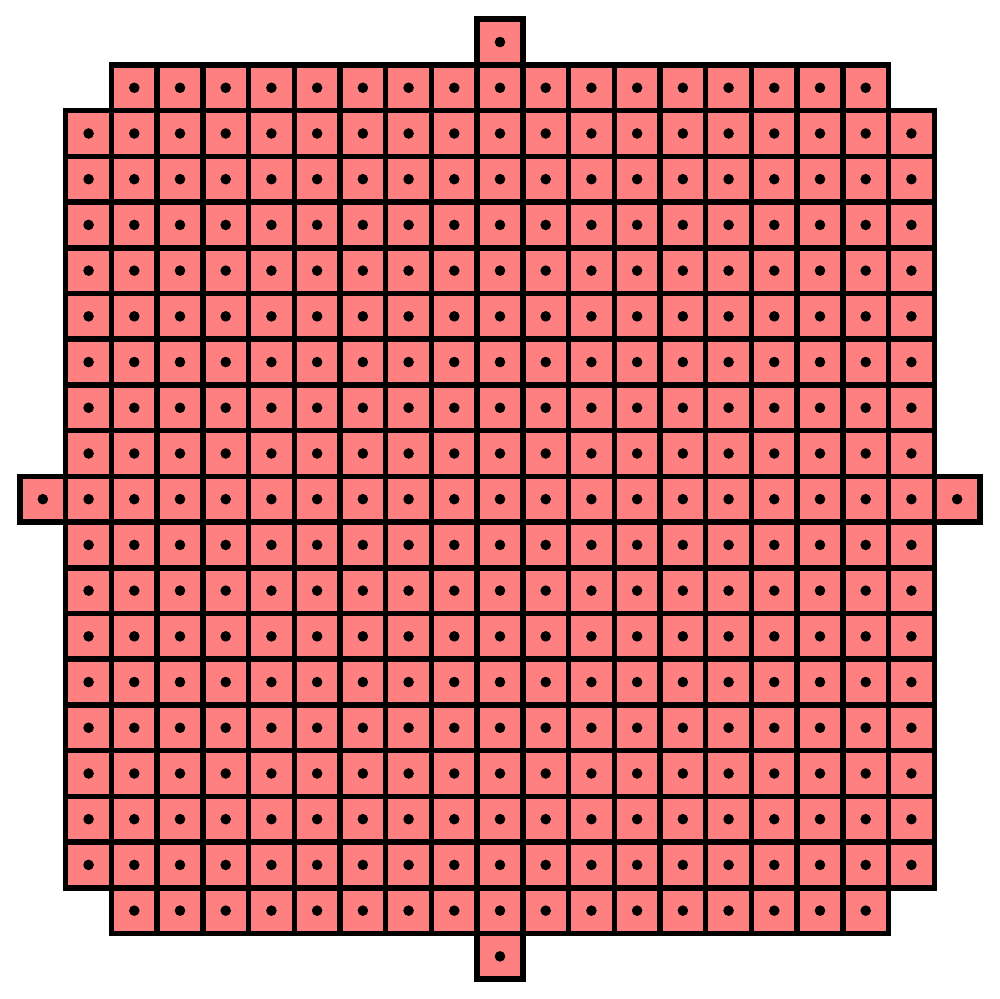}
	\end{minipage}
	\begin{minipage}[b]{0.15\linewidth}
		\centering
		\includegraphics[scale=0.17]{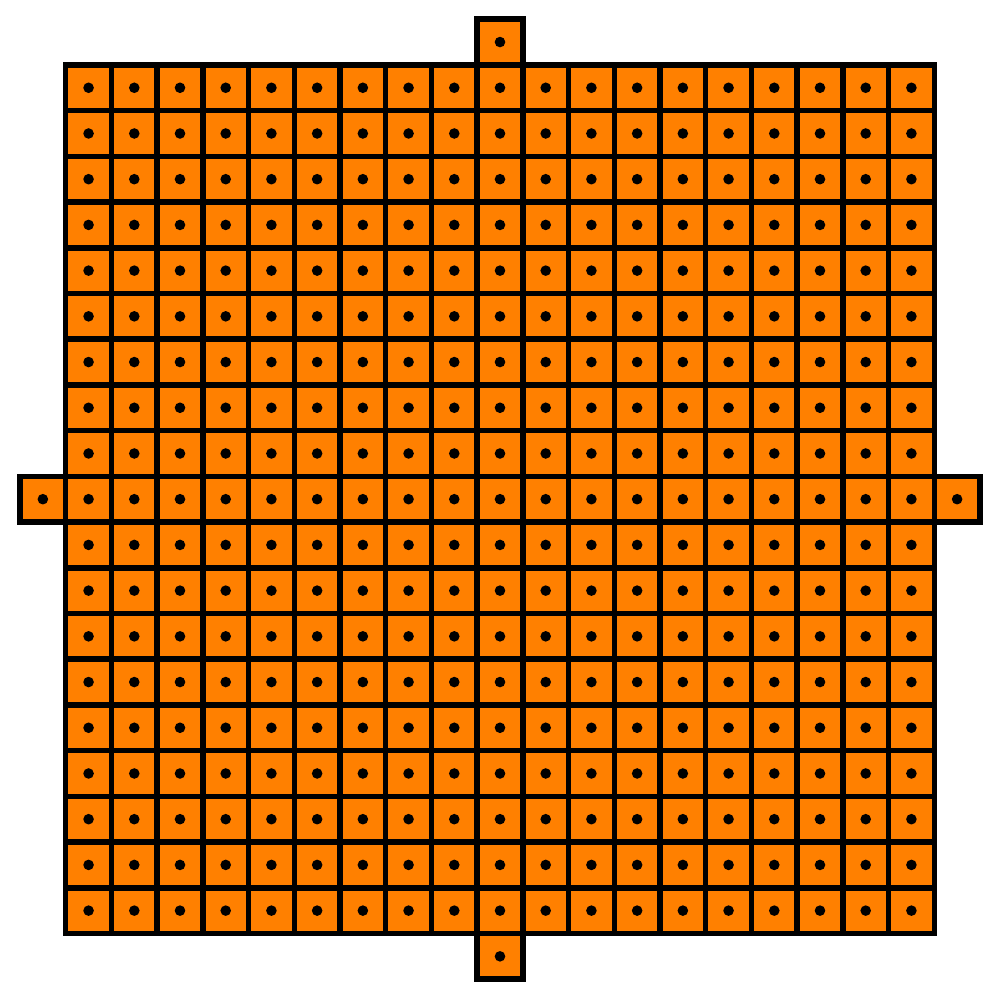}
	\end{minipage}
	\caption{From the left to the right: The polyominoes $T_1^{2}(10), T_2^{2}(10), T_3^{2}(10)$, $T_{4}^{2}(10)$, $T_{5}^{2}(10)$ and $T_7^{2}(10)$.}
	\label{fig:poliominos}
\end{figure}

When $n$, $p$ and $r$ vary, the polyomino $T_p^{n}(r)$ changes its shape. For $p=\infty$ we have squared polyominoes $T_{\infty}^{n}(r) = \left[r-\frac{1}{2},r+\frac{1}{2}\right]^{n}$. In Proposition \ref{pontos} we consider other special values for $n,r$ and $p$ (illustrated in Figure \ref{Fig2}) and obtain the shape of the polyominoes associated to $B_{p}^{n}(r)$.

From now on let ${\bm e_i}$ be the vector that has $1$ in the $i$-th entry and $0$ in the others, for $i=1,\ldots,n$.

\begin{proposition}\label{pontos} For fixed radius $r$ and  dimension $n$, we can assert:
	
	\begin{enumerate} \item[(i)] If $r$ is integer and $\frac{\ln n}{\ln
			\left(\frac{r}{r-1}\right)} \leq  p < \infty$, then $$T_{p}^{n}(r) = T_{\infty}^{n}(r-1) \bigcup{\left( \pm\bm{e_i} + \left[\frac{-1}{2},\frac{1}{2}\right]^n\right)}$$ and $\mu(n,p,r) = (2r-1)^{n} + 2n$. 
		\item[(ii)] If $r$ is integer, $p< \frac{\ln n}{\ln
			\left(\frac{r}{r-1}\right)}$ and  $(n-1)(r-1)^{p}+ (r-2) \leq r^{p}$,  then $$T_{p}^{n}(r) = \left(T_{\infty}^{n}(r-1) \bigcup{\left( \pm\bm{e}_i + \left[\frac{-1}{2},\frac{1}{2}\right]^n\right)}\right) \setminus  \bigcup_{\xx \in A} \left(\xx + \left[\frac{-1}{2},\frac{1}{2}\right]^n\right),$$ where $A =\{(\pm r, \ldots,\pm r) \}$ and $\mu(n,p,r) = (2r-1)^{n} + 2n - 2^{n}$. 
		\item[(iii)] If $r$ is not integer, $p< \frac{\ln n}{\ln
			\left(\frac{r}{\lfloor r \rfloor}\right)}$ and $(n-1)(\lfloor r \rfloor)^{p}+ (\lfloor r \rfloor -1)^{p} \leq r^{p}$, then $$T_p^{n}(r) = T_{\infty}^{n}(\lfloor r \rfloor) \setminus  \bigcup_{\xx \in B} \left(\xx + \left[\frac{-1}{2},\frac{1}{2}\right]^n\right),$$ where $B =\{(\pm \lfloor r \rfloor, \ldots,\pm \lfloor r \rfloor)\}$ and $\mu(n,p,r) = (2 \lfloor r \rfloor +1)^{n} - 2^n.$  
		\item[(iv)] If $r$ is not integer,  $(n-1)(\lfloor r \rfloor)^{p}+ (\lfloor r \rfloor -1)^{p} > r^{p}$ and $(n-1)(\lfloor r \rfloor)^{p}+ (\lfloor r \rfloor -2)^{p} \leq r^{p}$, then $$T_p^{n}(r) = T_{\infty}^{n}(\lfloor r \rfloor) \setminus  \bigcup_{\xx \in D} \left(\xx + \left[\frac{-1}{2},\frac{1}{2}\right]^n\right),$$ where $D =\{(\pm \lfloor r \rfloor, \ldots,\pm \lfloor r \rfloor)\} \cup \{(\pm \lfloor r-1 \rfloor, \ldots,\pm \lfloor r \rfloor)\} \cup \{(\pm \lfloor r \rfloor, \ldots,\pm \lfloor r-1 \rfloor)\}$ and $\mu(n,p,r) = (2 \lfloor r \rfloor +1)^{n} - (n+1)2^{n}.$  
	\end{enumerate} \end{proposition}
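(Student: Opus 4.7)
The proof reduces to identifying which lattice points $\bm{z}\in\mathbb{Z}^n$ satisfy $\sum_{i=1}^n |z_i|^p \leq r^p$, since by definition $T_p^n(r)$ is the union of unit cubes centered at those points. In each of the four cases I would partition $\mathbb{Z}^n$ into an inscribed $\ell_\infty$-cube, a small number of ``boundary layer'' points which the hypotheses decide to keep or drop, and everything else (which is excluded). The hypotheses of each case are tuned precisely to control the transition between these regions.

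For case (i), every lattice point of $B_\infty^n(r-1)$ has $\ell_p^p$-norm at most $n(r-1)^p$, and the condition $p\geq \ln n/\ln(r/(r-1))$ is exactly $n(r-1)^p \leq r^p$; this places the whole inner cube inside the ball. The $2n$ points $\pm r\bm{e}_i$ (the statement's ``$\pm\bm{e}_i$'' must be read as these extremal axis points, so that the count $(2r-1)^n+2n$ is consistent) lie on the unit $\ell_p$-sphere. No other integer point qualifies: if $\bm{z}$ has some $|z_j|\geq r$, then either $|z_j|>r$ and $\|\bm{z}\|_p^p>r^p$, or $|z_j|=r$, in which case any nonzero second coordinate makes $\|\bm{z}\|_p^p>r^p$. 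For case (ii), reversing the inequality on $p$ kicks the $2^n$ corners $(\pm(r-1),\ldots,\pm(r-1))$ out of the ball (the set $A$ in the statement should be these corners, again matching the cardinality $-2^n$), while $(n-1)(r-1)^p+(r-2)^p \leq r^p$ keeps every other lattice point of the inscribed cube inside, since the worst remaining point has exactly this norm. The axis-point and outside-cube analysis from (i) is unchanged.

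Cases (iii) and (iv) follow the same template, with the inscribed cube now $B_\infty^n(\lfloor r\rfloor)$ and no axis point to adjoin (since $r<\lfloor r\rfloor+1$ prevents any integer of $\ell_\infty$-norm $\lfloor r\rfloor+1$ from entering the ball). Case (iii) removes only the $2^n$ corners under its two hypotheses. In case (iv) (where $\lfloor r-1\rfloor$ is just $\lfloor r\rfloor-1$) I would additionally remove the $n\cdot 2^n$ ``near-corners'' obtained from a corner by reducing one coordinate by $1$; the first hypothesis forces them out and the cardinality $-(n+1)2^n$ matches. The non-routine step is showing every remaining lattice point of the cube stays in the ball: the maximal $\ell_p^p$-norm of such a point is $(n-1)\lfloor r\rfloor^p+(\lfloor r\rfloor-2)^p \leq r^p$, attained when exactly one coordinate is reduced by $2$; for points with two or more coordinates reduced by $1$, convexity of $x\mapsto x^p$ on $[0,\infty)$ gives $2(\lfloor r\rfloor-1)^p \leq \lfloor r\rfloor^p+(\lfloor r\rfloor-2)^p$, so replacing two single-unit reductions by one double reduction only raises the norm, reducing back to the previous bound. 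This convexity comparison in (iv) is the only genuine subtlety; everything else is direct substitution into the stated hypotheses.
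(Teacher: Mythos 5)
Your proposal is correct and follows essentially the same route as the paper: use the exponent condition (equivalent to $n(r-1)^p\leq r^p$, resp.\ its variants) to sandwich the ball between $\ell_\infty$-cubes and then classify the few boundary lattice points case by case. In fact you supply more than the paper does --- its proof only treats item (i) and declares the rest ``very similar'' --- and your reading of the statement's typos (the axis points $\pm r\bm{e}_i$, the corner set $A=\{(\pm(r-1),\ldots,\pm(r-1))\}$, the exponent on $(r-2)$ in (ii)) together with the convexity comparison $2(\lfloor r\rfloor-1)^p\leq \lfloor r\rfloor^p+(\lfloor r\rfloor-2)^p$ in item (iv) are exactly the details needed to make the argument complete.
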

	
	\begin{proof} For the proof of (i) note that if $r$ is integer and $p\geq \frac{\ln n}{\ln \left(\frac{r}{r-1}\right)}$, then $n (r-1)^{p} \leq r^{p}$. Hence,   $B_{\infty}^{n}(r-1) \subseteq B_{p}^{n}(r) \subseteq B_{\infty}^{n}(r)$ since if $x \in B_{\infty}^{n}(r-1)$, then $\|x\|_p \leq n^{1/p}
		\|x\|_{\infty} \leq n^{1/p}(r-1) \leq r$. The proof of Items (ii), (iii) and (iv) is very similar. 
	\end{proof}

\begin{figure}[h!]
	\begin{minipage}[b]{0.15\linewidth}
		\centering
		\includegraphics[scale=0.17]{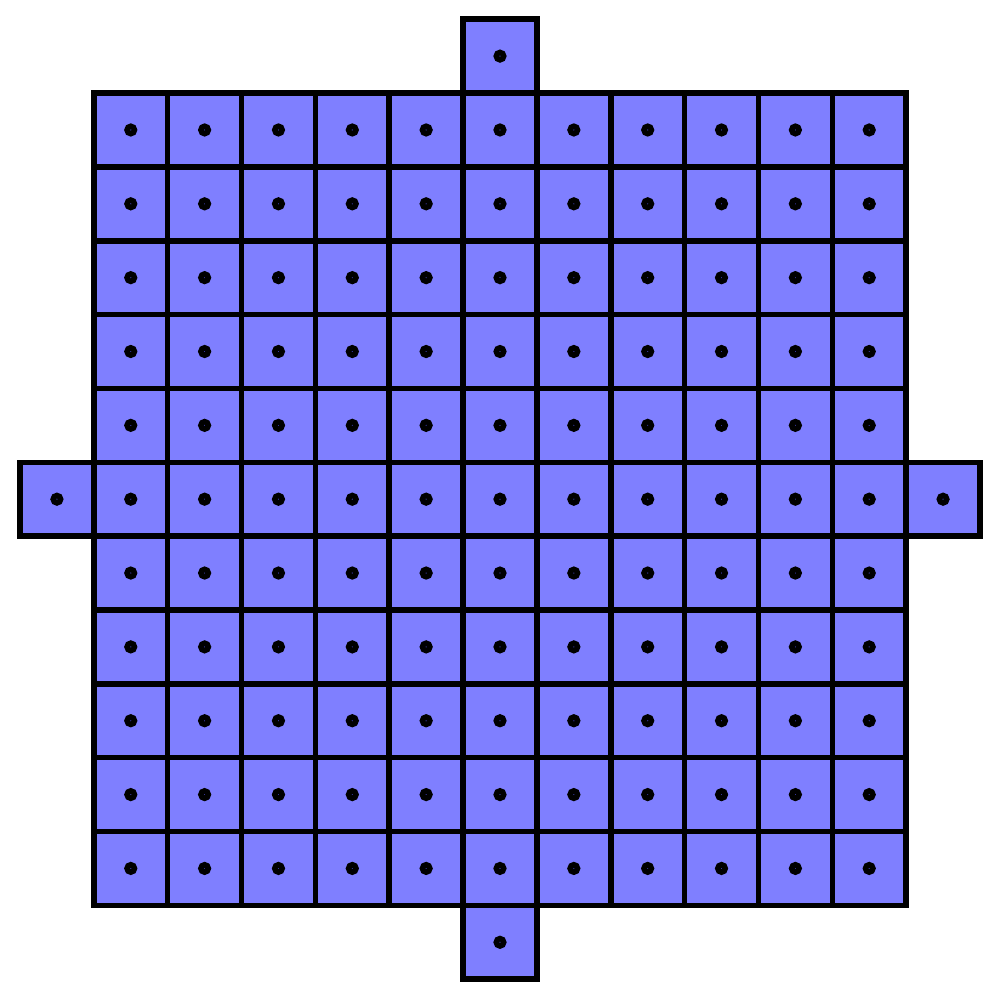}
	\end{minipage}
	\begin{minipage}[b]{0.15\linewidth}
		\centering
		\includegraphics[scale=0.17]{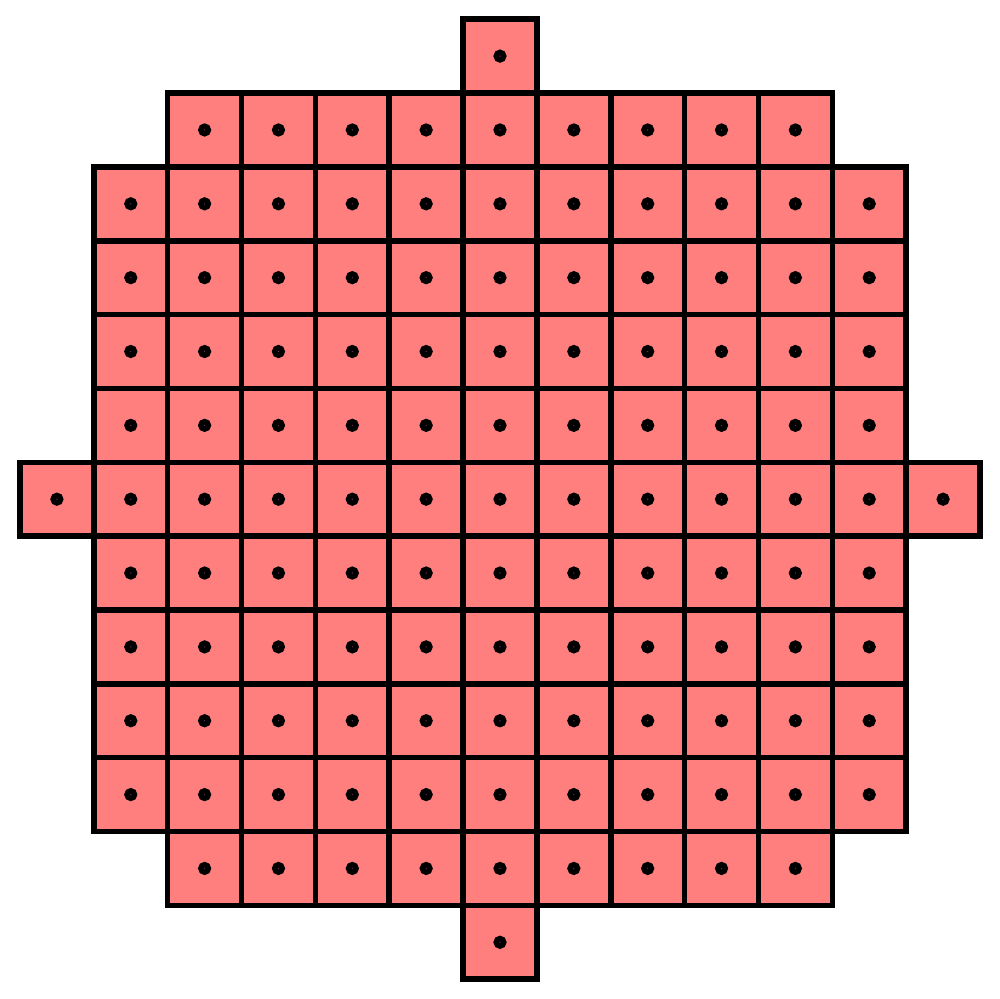}
	\end{minipage}
	\begin{minipage}[b]{0.15\linewidth}
		\centering
		\includegraphics[scale=0.17]{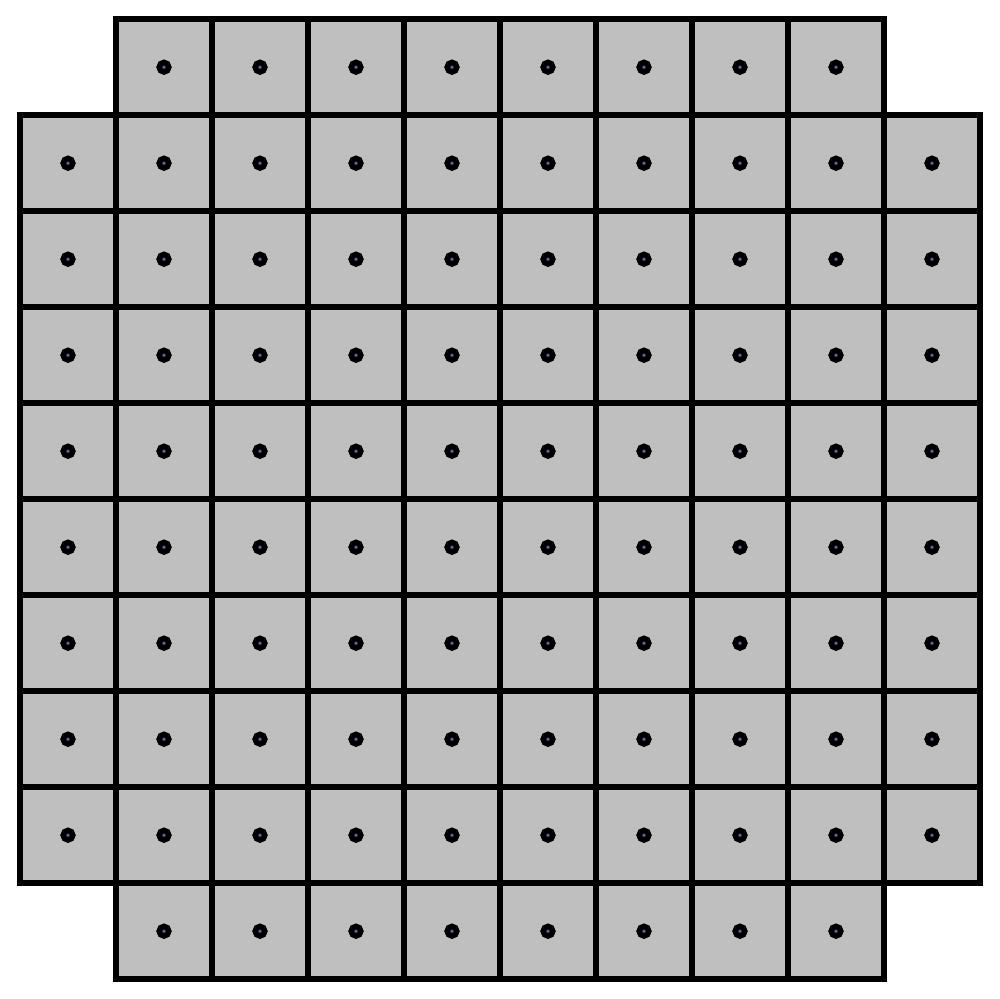}
	\end{minipage}
	\begin{minipage}[b]{0.15\linewidth}
		\centering
		\includegraphics[scale=0.17]{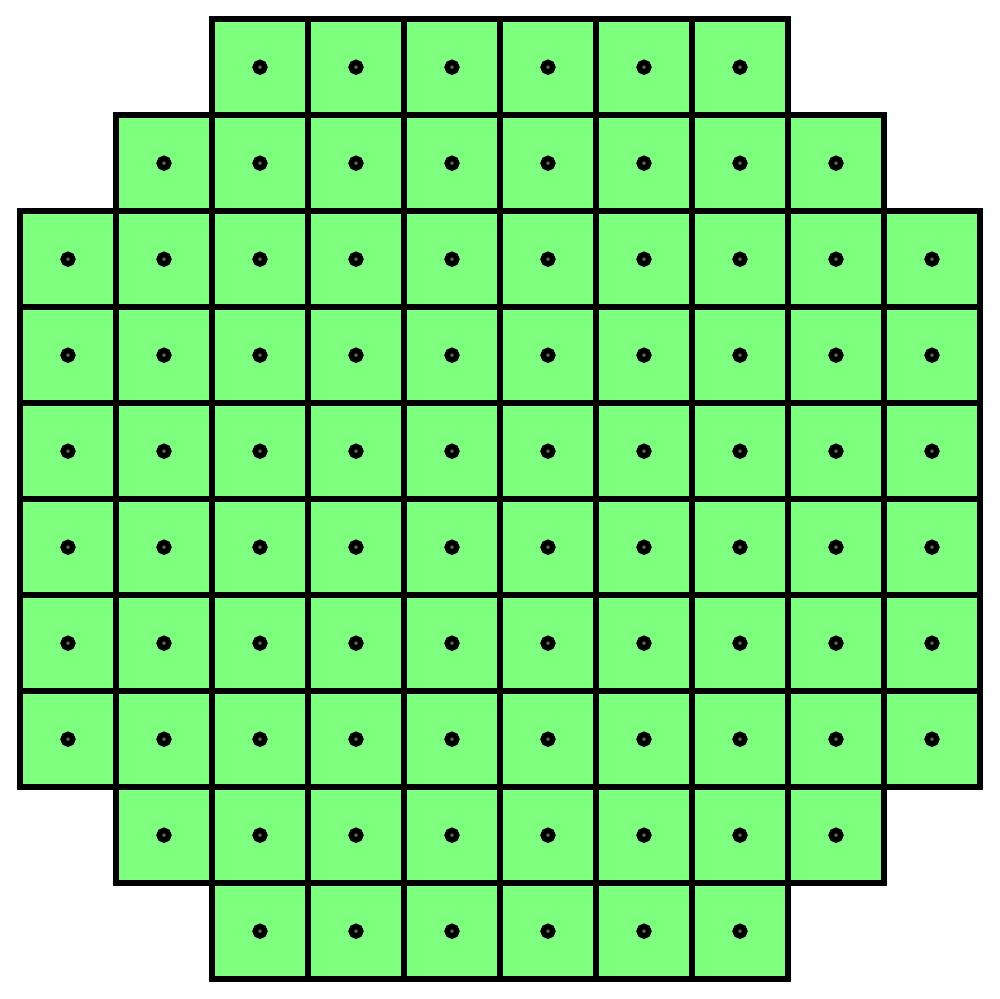}
	\end{minipage}
\caption{From the left to the right are illustrated Items (i), (ii), (iii) and (iv) of Proposition \ref{pontos} ($T_4^{2}(6)$, $T_3^{2}(6)$, $T_3^{2}(5.5)$ and $T_{2}^{2}(5.5)$, respectively).}
	\label{Fig2}
\end{figure}

\section{Perfect and quasi-perfect codes}\label{Quasi-Perfect}

It is well known that the packing radius of a code $\mathscr{C} \subset \mathbb{Z}^n$ in the $\ell_1$ metric depends only on the minimum distance of the code and it is given by the formula
$r_1 = r_1(\mathscr{C}) = \left\lfloor \frac{d_{1}(\mathscr{C})-1}{2} \right\rfloor.$ In this case a code $\mathscr{C}$ is called perfect if $\bigcup_{\cc \in \mathscr{C}} \cc + B_{1}^{n}(r_1) = \mathbb{Z}^{n}$ (or $\mathbb{Z}_q^{n}$) and $B_{1}^{n}(\cc_1,r_1) \cap B_{1}^{n}(\cc_2,r_1) = \emptyset$ for all $\cc_1, \cc_2 \in \mathscr{C}$ with $\cc_1 \neq \cc_2$. A code $\mathscr{C}$ is called quasi-perfect if $\bigcup_{\cc \in \mathscr{C}} \cc + B_{1}^{n}(r_1) \subsetneq \mathbb{Z}^{n}$ (or $\mathbb{Z}_q^{n}$) and $\bigcup_{\cc \in \mathscr{C}} \cc + B_{1}^{n}(r_1+1) = \mathbb{Z}^{n}$.

For the $\ell_p$ metric, $2 \leq p < \infty$, two codes with same minimum distance may have different packing radii and thus the packing radius is not uniquely determined by the minimum distance (see \cite[Remark 5.1]{PerfectpLee}). Moreover, the packing radius in the $\ell_p$ metric for $2 \leq p < \infty$ is not necessarily an integer number (as it can be seen next in Example \ref{radius}). 

In order to define the packing and the covering radii of $\mathscr{C} \subseteq \mathbb{Z}^{n}$ in the $\ell_p$ metric for $2\leq p < \infty$  we first define the {distance set} of the $\ell_p$ metric in $\mathbb{Z}^{n}$ as $$\mathcal{D}_{p,n} = \{d \in \mathbb{R}, \mbox{ such that there are }  z \in \mathbb{Z}^{n} \mbox{ and } c \in \mathscr{C} \mbox{ with } d_p(z,c) = d\}.$$ It follows that$D_{p,n} \subset \left\{0, 1^{1/p}, 2^{1/p}, 3^{1/p},\ldots \right\}$. 

\begin{example}\label{radius} Consider $\Lambda \subseteq \mathbb{Z}^{n}$ the lattice generated by $\{(5,11),(13,1)\}$. The first elements of $D_{2,n}$ are $1^{1/2}, 2^{1/2}, 4^{1/2}, 5^{1/2}, 8^{1/2}, 9^{1/2}, 10^{1/2}, 13^{1/2}, ,$ \linebreak $ 16^{1/2},$  $17^{1/2},$  $18^{1/2}, 20^{1/2}, 25^{1/2}, 26^{1/2}, 29^{1/2}, 32^{1/2}, 34^{1/2}, 36^{1/2}, 37^{1/2}$, $40^{1/2}$, $41^{1/2}, 45^{1/2}$, $49^{1/2}$ and $50^{1/2}$. In this case, $r_2 = 37^{1/2}$ is the largest value of $D_{2,n}$ such that the balls centered at the points of $\Lambda$ with radius $r_2$ do not intercept each other and $R_2 = 50^{1/2}$ is the smallest value of $D_{2,n}$ such that the union of the balls centered at the points of $\Lambda$ with radius $R_2$ covers $\mathbb{Z}^{n}$.

\begin{figure}[h!]
	\begin{minipage}[b]{0.30\linewidth}
		\centering
		\includegraphics[scale=0.3]{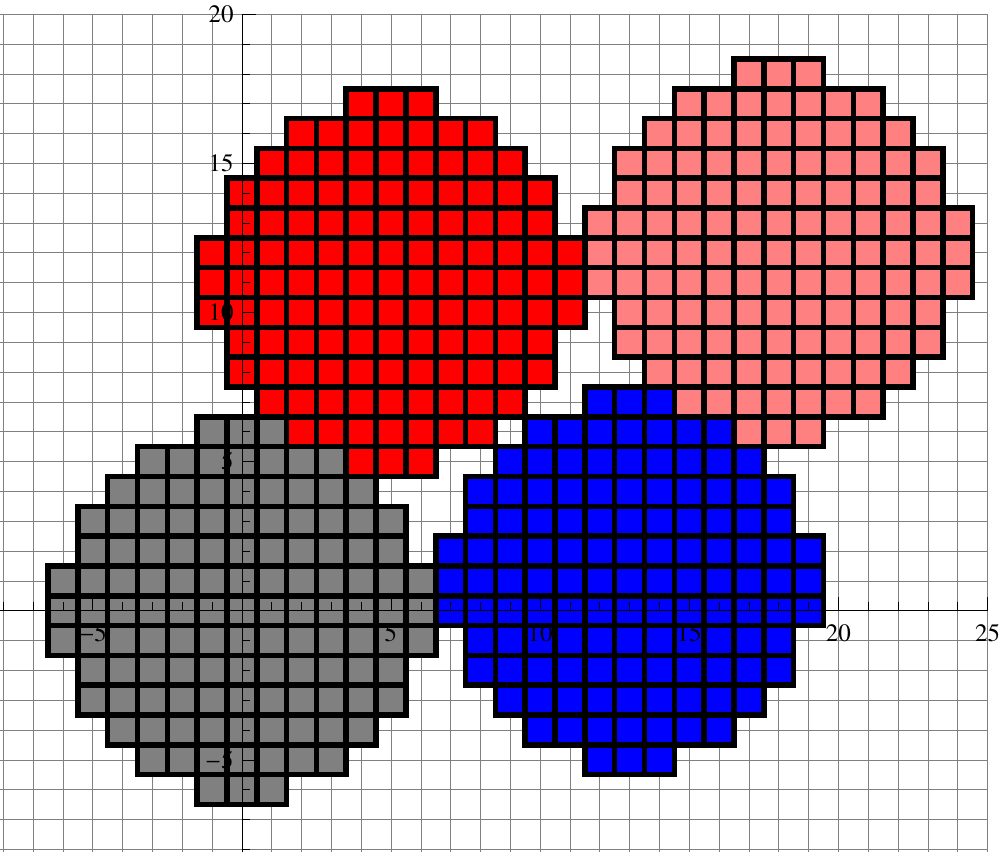}
	\end{minipage}
	\begin{minipage}[b]{0.30\linewidth}
		\centering
		\includegraphics[scale=0.3]{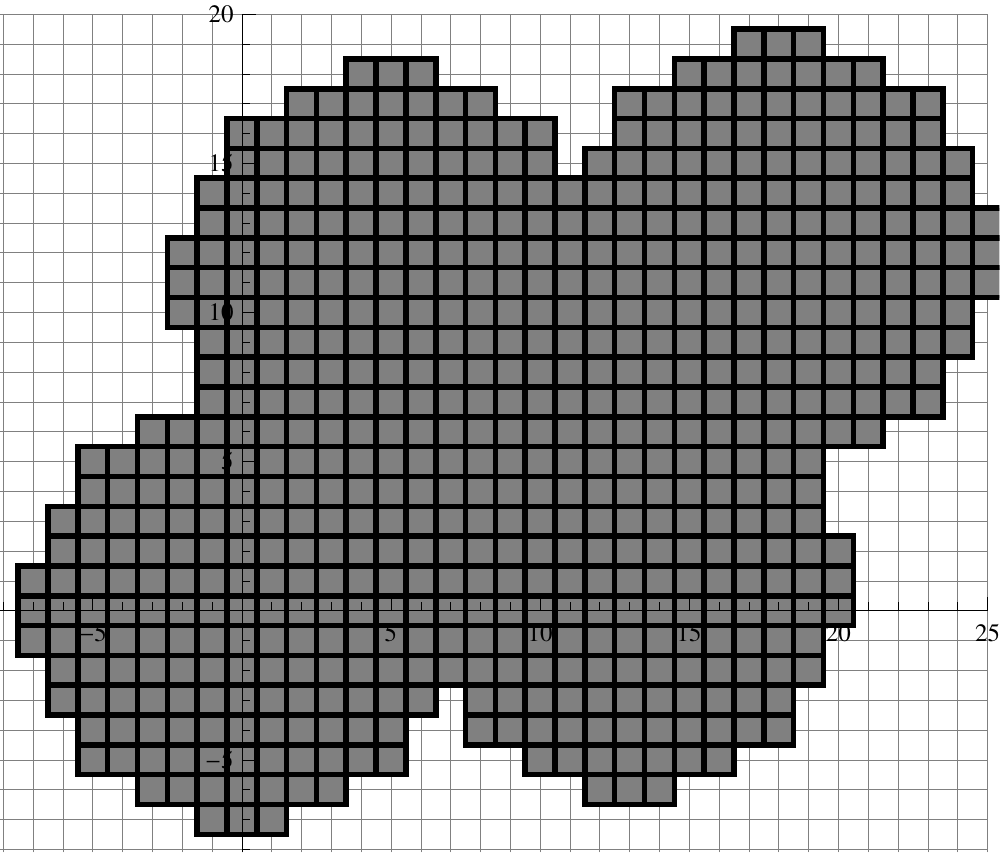}
	\end{minipage}
\caption{From the left to the right some polyominoes associated to balls in the $\ell_2$ metric centered at points of $\Lambda$ with radii $r_2$ and $R_2$, respectively.}
	\label{Fig3}
\end{figure}
\end{example}

The {\it packing radius} of a code $\mathscr{C} \subseteq \mathbb{Z}^{n}$ in the $\ell_p$ metric is the greatest $r \in \mathcal{D}_{p,n}$ such that $B_{p}^n(\bm{\xx},r) \cap B_{p}^n(\bm{\yy},r) = \emptyset$ holds for all $\xx, \yy \in \mathscr{C}$. The packing radius of a code $\mathscr{C} \subset \mathbb{Z}^{n}$ in the $\ell_p$ metric will be denoted by $r_p=r_p(\mathscr{C})$.

The {\it covering radius} of a code $\mathscr{C} \subseteq \mathbb{Z}^{n}$ in the $\ell_p$ metric is the smallest $r \in \mathcal{D}_{p,n}$ such that $\bigcup_{\cc \in \mathscr{C}} \cc+ B_p^{n}(,r) = \mathbb{Z}^{n}$. The covering radius of a code $\mathscr{C} \subset \mathbb{Z}^{n}$in the $\ell_p$ metric will be denoted by $R_p=R_p(\mathscr{C})$.

We denote by $\overline{r_p} = \overline{r_p}(\Lambda)$ and $\overline{R_p}= \overline{R_p}(\Lambda)$, respectively, the real packing and covering radii of a lattice $\Lambda$ in $\mathbb{R}^{n}$ (for the packing radius the balls centered at $\Lambda$ with radius $\overline{r_p}$ do not intercept each other in $\mathbb{R}^{n}$ and for the covering radius the union of the balls centered at the points of $\Lambda$ with radius $\overline{R_p}$ covers $\mathbb{R}^{n}$).  

\begin{example} Consider the lattice $\Lambda$ generated by $\{(1,4),(0,24)\}$. We have that  $r_p = 2$, $\overline{r_p} = 2.0616$,  $R_p = 3.1623$ and $\overline{R_p} = 3.3001$. 

\begin{figure}[h!]
	\begin{minipage}[b]{0.2\linewidth}
		\centering
		\includegraphics[scale=0.2]{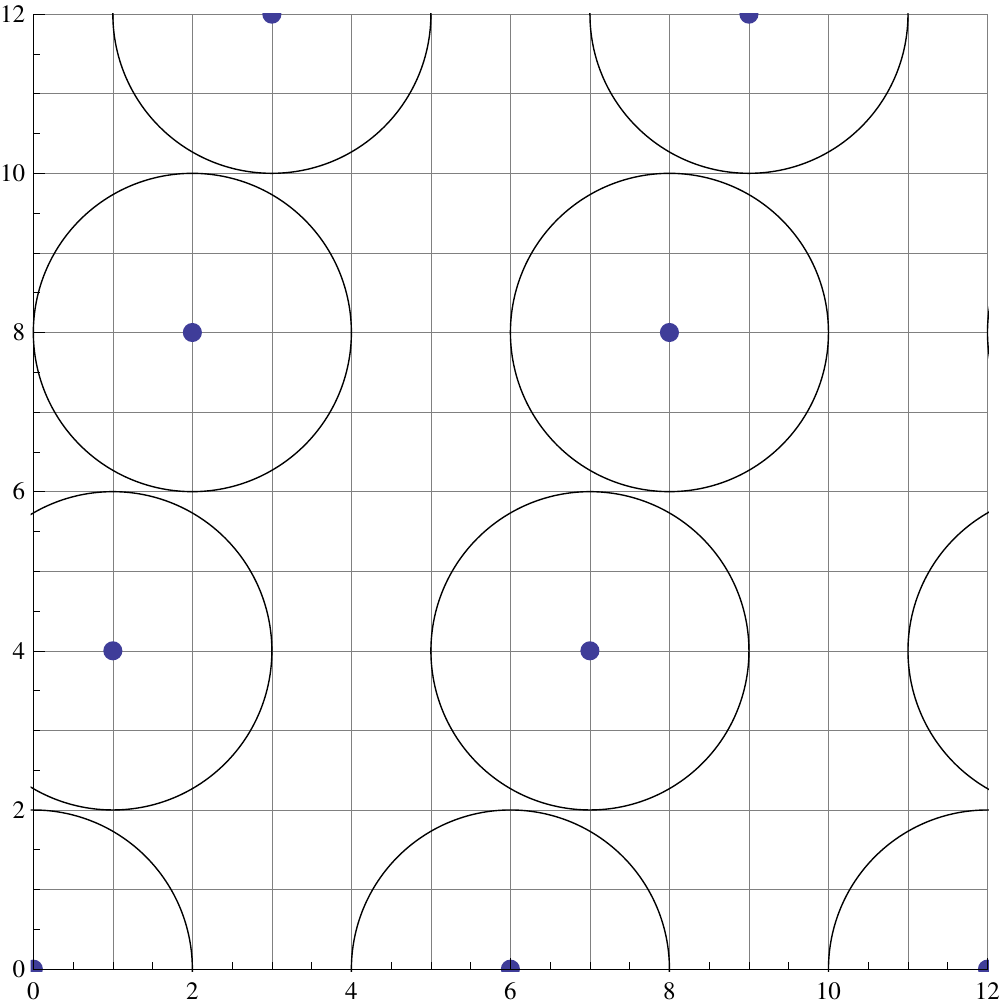}
	\end{minipage}
	\begin{minipage}[b]{0.25\linewidth}
		\centering
		\includegraphics[scale=0.2]{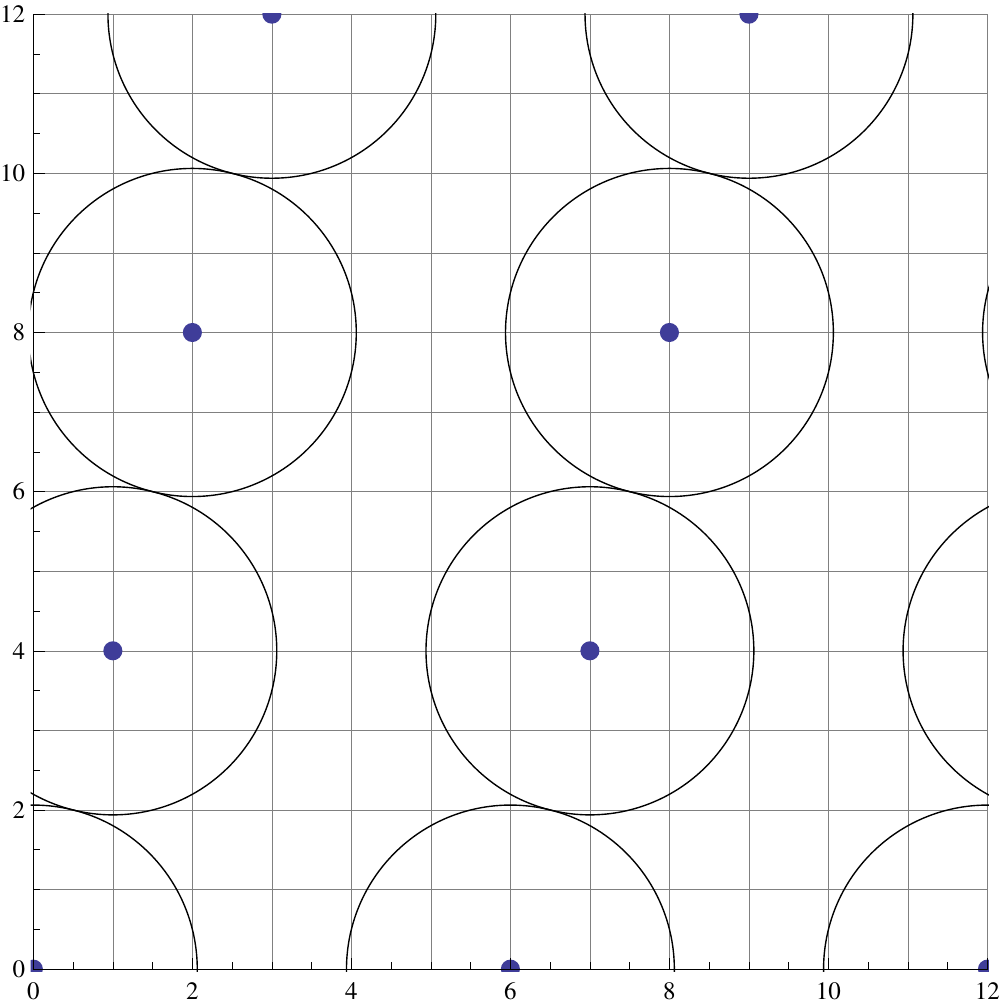}
	\end{minipage}
	\begin{minipage}[b]{0.25\linewidth}
		\centering
		\includegraphics[scale=0.2]{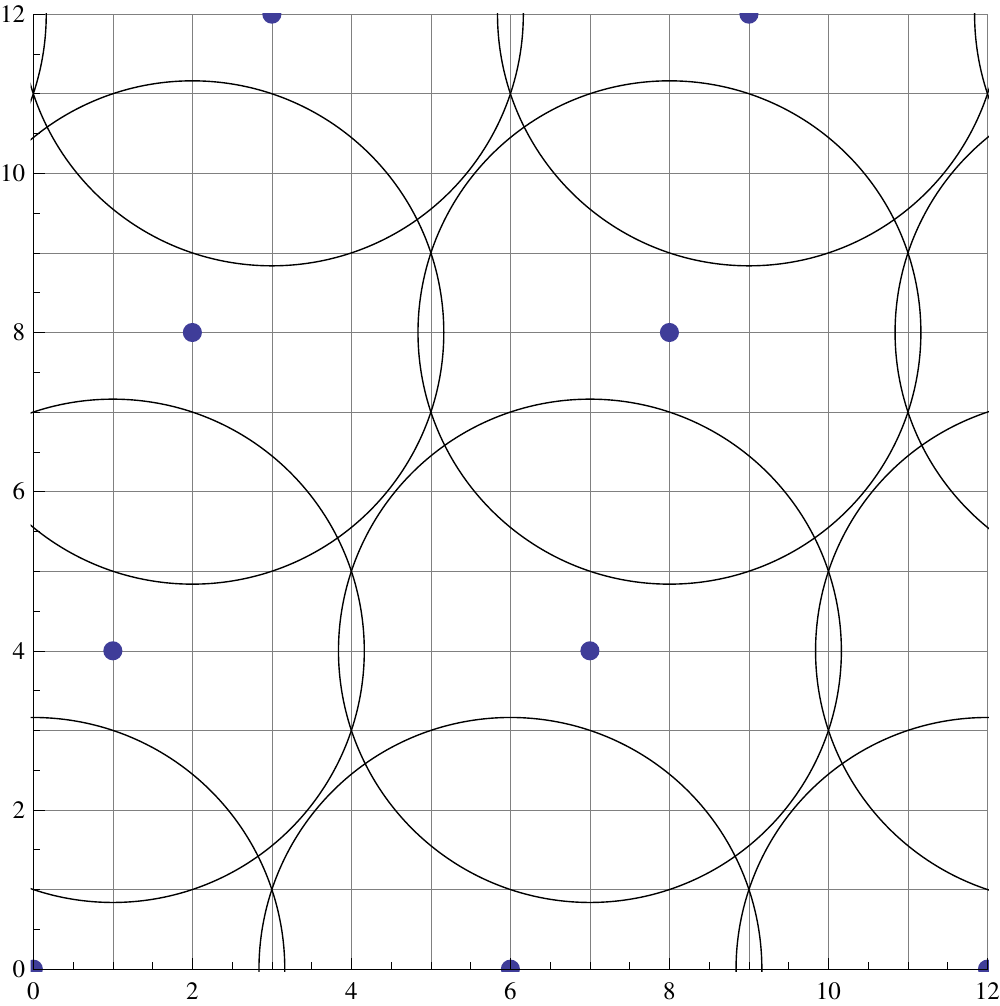}
	\end{minipage}
	\begin{minipage}[b]{0.2\linewidth}
		\centering
		\includegraphics[scale=0.2]{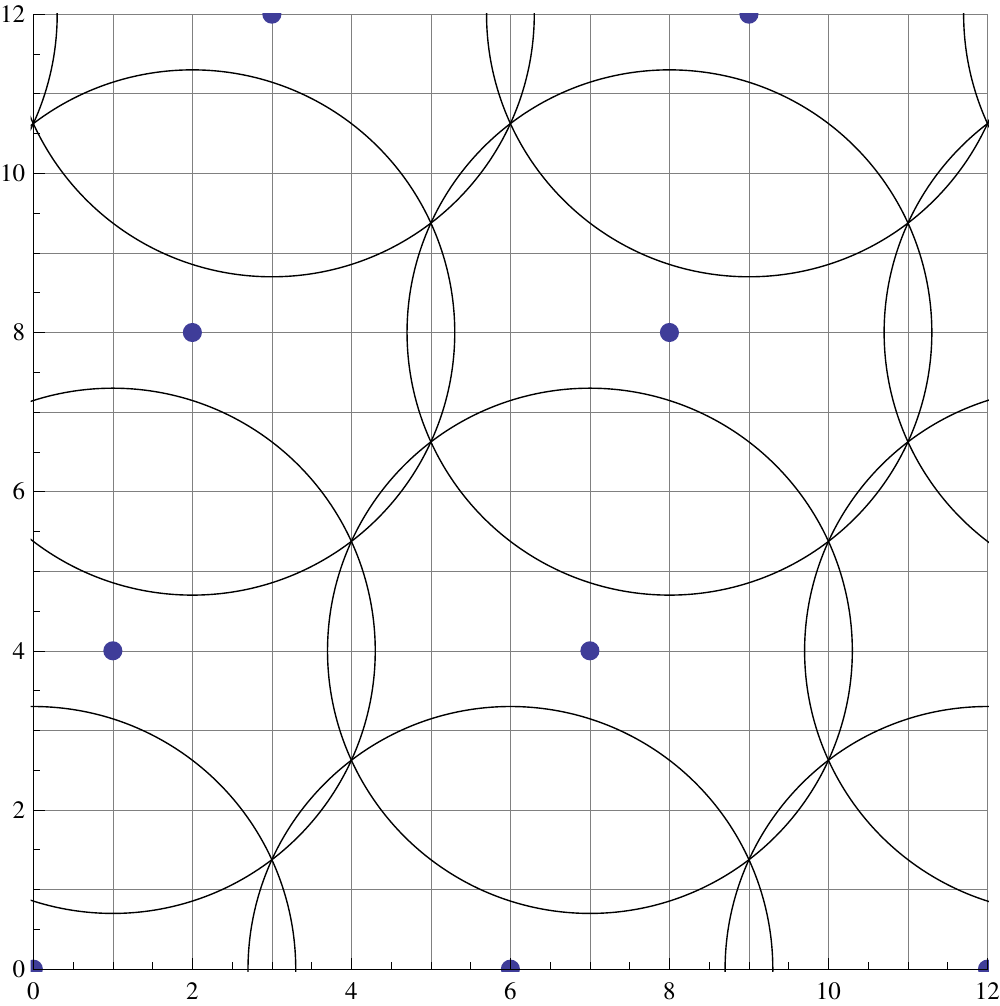}
	\end{minipage}
	\caption{From the left to the right the sphere packing with radii $r_p$, $\overline{r_p}$, $R_p$ and $\overline{R_p}$, respectively.}
	\label{Fig4}
\end{figure}
\end{example}

We define the distance of two elements $r_a, r_b \in D_{p,n}$ with $r_a < r_b$ as $d(r_a,r_b) = \#\left( D_{p,n} \cap [r_a,r_b)\right)$, where $[r_a,r_b)$ denotes the closed interval in $\mathbb{R}$ and $d(r_a,r_a) = 0$. We say that a lattice $\Lambda$ is {\it $t$-imperfect} if $d(r_p,R_p) = t$. When $t = 0$, that is $r_p = R_p$, the lattice is called {\it perfect}. When $t=1$ the lattice is called {\it quasi-perfect}. 

In \cite[Corollary 5.5]{PerfectpLee} it was shown that if $1 < p < \infty$ and $n \geq 2$, then the radius packing $r_p$  of a linear perfect code in the $\ell_p$ metric satisfies
	\begin{equation}\label{6}  r_p \leq \frac{n^{1/p}}{2} \frac{\left(1+(\overline{\Delta_p^{n}})^{1/n}\right)}{\left(1-(\overline{\Delta_p^{n}})^{1/n}\right)}.
	\end{equation} 
where $\overline{\Delta_p^{n}}$ denotes the supremum of the packing densities over all $n$-dimensional lattices in the $\ell_p$ metric.

We recall that the {\it covering density} of a lattice $\Lambda$ in the $\ell_p$ metric is given  by $\overline{\Theta_{p}^{n}}(\Lambda) = \frac{\mathcal{V}_{p}^{n} \overline{R_p}^n }{\det \Lambda}$, where $\mathcal{V}_{p}^{n}$ is the Euclidean volume of the $n$-dimensional unitary sphere centered at the origin in the $\ell_p$ metric \cite[p.321]{VolumBall}. $\overline{\Theta_{p}^{n}}$ denotes the  infimum  of the covering densities  over all $n$-dimensional  lattices in the $\ell_p$ metric.

Proposition \ref{bounds} give us some relations among packing and covering radii of a quasi-perfect lattice and covering density for the lattice in $\mathbb{R}^{n}$.

\begin{proposition}\label{bounds}  Let $1 < p < \infty$ and $n \geq 2$. The covering radius $R_p$ and the packing radius $r_p$ of a linear quasi-perfect code in the $\ell_p$ metric satisfies 
	\begin{equation} \label{7} \overline{\Theta _{p}^{n}} \leq \frac{\mathcal{V}_{p}^{n}\left(R_p +\frac{1}{2}\sqrt[p]{n}\right){}^n }{|{B}_p^{n}\left({r}_p\right)|}\end{equation} and
	\begin{equation} \label{8}\overline{\Theta _{p}^{n}} \leq \frac{\mathcal{V}_{p}^{n}\left(r_p +\sqrt[p]{n}\right){}^n}{|{B}_p^n\left(r_p\right)|}. \end{equation}
\end{proposition}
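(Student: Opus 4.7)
The plan is to invoke the general bound $\overline{\Theta_p^n}\leq\overline{\Theta_p^n}(\Lambda)=\mathcal{V}_p^n\overline{R_p}^n/\det\Lambda$ for the given quasi-perfect $\Lambda$, and then estimate the denominator from below and the numerator from above in two different ways to obtain (7) and (8). The shared lower bound on the denominator is $\det\Lambda\geq|B_p^n(r_p)|$. This comes from the packing property: since $\Lambda$ is a full-rank sublattice of $\mathbb{Z}^n$, one has $\det\Lambda=[\mathbb{Z}^n:\Lambda]$, and pairwise disjointness of the integer balls $\bm{c}+B_p^n(r_p)$ forces the $|B_p^n(r_p)|$ integer points of any one such ball to represent distinct cosets of $\Lambda$ in $\mathbb{Z}^n$.

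For (7), I would obtain the numerator bound $\overline{R_p}\leq R_p+\tfrac{1}{2}\sqrt[p]{n}$ by coordinatewise rounding: any $\bm{y}\in\mathbb{R}^n$ admits $\bm{z}\in\mathbb{Z}^n$ with $\|\bm{y}-\bm{z}\|_p\leq\tfrac{1}{2}\sqrt[p]{n}$, and since $\bm{z}$ is within $R_p$ of $\Lambda$, the triangle inequality yields $d_p(\bm{y},\Lambda)\leq\tfrac{1}{2}\sqrt[p]{n}+R_p$. Plugging this together with $\det\Lambda\geq|B_p^n(r_p)|$ into the density formula gives (7).

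For (8) the target is $\overline{R_p}\leq r_p+\sqrt[p]{n}$, which combined with the same denominator estimate finishes the proof. I would break this into two steps: first establish the refined integer-covering bound $R_p\leq r_p+\tfrac{1}{2}\sqrt[p]{n}$ for any quasi-perfect lattice, and then chain with the rounding argument from (7) to get $\overline{R_p}\leq R_p+\tfrac{1}{2}\sqrt[p]{n}\leq r_p+\sqrt[p]{n}$. Here the quasi-perfect hypothesis enters crucially: because $R_p$ is by definition the immediate successor of $r_p$ in the distance set $D_{p,n}$, the gap $R_p-r_p$ is controlled by the local spacing of $D_{p,n}$ near $r_p$.

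The main obstacle is exactly this bound $R_p\leq r_p+\tfrac{1}{2}\sqrt[p]{n}$ uniformly over quasi-perfect $\Lambda$. Geometrically it amounts to showing that the uncovered set $\{\bm{z}\in\mathbb{Z}^n:d_p(\bm{z},\Lambda)>r_p\}$ cannot contain all $2^n$ corners of a unit hypercube in $\mathbb{Z}^n$; converting the quasi-perfect consecutiveness hypothesis into such a uniform geometric statement will likely require either a careful case analysis on the representations of $r_p^p$ as a sum of $n$ nonnegative $p$-th powers, or a Minkowski-style counting argument based on the fact that the number of uncovered cosets of $\Lambda$ in $\mathbb{Z}^n$ equals exactly $\det\Lambda-|B_p^n(r_p)|$.
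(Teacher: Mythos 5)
Your proposal follows the same skeleton as the paper's proof: bound $\overline{\Theta_p^n}$ by the covering density $\mathcal{V}_p^n\,\overline{R_p}^{\,n}/\det\Lambda$ of the given lattice, lower-bound the denominator by $|B_p^n(r_p)|$ using the disjointness of the integer balls (the points of one ball lie in distinct cosets of $\Lambda$ in $\mathbb{Z}^n$), and upper-bound $\overline{R_p}$ by $R_p+\frac{1}{2}\sqrt[p]{n}$ via coordinatewise rounding. For Inequality (\ref{7}) this is complete and coincides with the paper's argument.

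For Inequality (\ref{8}), however, your write-up stops short of a proof. Everything reduces to $\overline{R_p}\le r_p+\sqrt[p]{n}$, which is the only place the quasi-perfection hypothesis enters, and you explicitly leave your proposed intermediate step $R_p\le r_p+\frac{1}{2}\sqrt[p]{n}$ as an unresolved ``obstacle,'' offering only candidate strategies. As written this is a genuine gap: without that step, or your equivalent reformulation that the uncovered set cannot contain all $2^n$ corners of a unit hypercube of $\mathbb{Z}^n$, (\ref{8}) is not established. Three remarks. First, the paper is no more forthcoming here --- its proof consists of the single assertion that quasi-perfection implies $\overline{R_p}\le r_p+\sqrt[p]{n}$ --- so you have correctly isolated the one nontrivial claim rather than overlooked it. Second, your first route asks for more than is needed: the hypercube-corner statement already yields $\overline{R_p}\le r_p+\sqrt[p]{n}$ directly (every real point is within $\ell_p$-distance $\sqrt[p]{n}$ of each corner of its cube) and is weaker than $R_p\le r_p+\frac{1}{2}\sqrt[p]{n}$, so it is the better target. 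Third, the cheap consequence of quasi-perfection --- walking coordinatewise from an uncovered integer point toward its nearest codeword shows that some uncovered point has a covered neighbour at distance $1$, whence $R_p\le r_p+1$ --- only gives $\overline{R_p}\le r_p+1+\frac{1}{2}\sqrt[p]{n}$, which matches the required bound only when $n\ge 2^p$; so a genuinely new argument is needed to close (\ref{8}), and neither you nor the paper supplies it.
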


\begin{proof} The real and integer covering and packing radii satisfy
	$\overline{R}_p\leq R_p+\left\|\frac{1}{2}\mathbf{1}\right\|_p=R_p +\frac{1}{2}\sqrt[p]{n}$ where $\mathbf{1}=(1,\ldots ,1)\in \mathbb{Z}^n$. Therefore, it follows that
	\begin{equation*} \begin{split} \overline{\Theta_{p}^{n}} & \leq \frac{\mathcal{V}_{p}^{n}\overline{R_p}^n }{\text{vol}(\Lambda )}\leq  \frac{\mathcal{V}_{p}^{n}\left(R_p +\frac{1}{2}\sqrt[p]{n}\right)^n }{\text{vol}(\Lambda )}\leq \frac{\mathcal{V}_{n,p}\left(R_p +\frac{1}{2}\sqrt[p]{n}\right){}^n }{|{B}_p^n\left(r_p\right)|}  \end{split} \end{equation*}
	
Since the linear code is quasi-perfect by hypothesis, it follows that $\overline{R_p} \leq r_p + \sqrt[p]{n}$ and then

\begin{equation*} \begin{split} \overline{\Theta _{p}^{n}} & \leq \frac{\mathcal{V}_{p}^{n}\overline{R}_p^n }{\text{vol}(\Lambda )}\leq \frac{\mathcal{V}_{p}^{n}\left(r_p +\sqrt[p]{n}\right)^n }{\text{vol}(\Lambda )}  \leq \frac{\mathcal{V}_{p}^{n}\left(r_p +\sqrt[p]{n}\right){}^n }{|{B}_p^n\left(r_p\right)|} \end{split}\end{equation*}
\end{proof}
Since neither the packing and covering density of any lattice cannot exceed the best possible density in dimension $n$, these inequalities give a limitation on the packing radius of a quasi-perfect code.  
A numerical comparison among these limitations, provided by (\ref{6}), (\ref{7}) and (\ref{8}) is presented in Table 1. On the left table, the third and fourth columns represent bounds for the packing and covering densities of the perfect lattices with packing radius $r_2 = R_2$ obtained from Inequalities (\ref{6}) and (\ref{7}), respectively. According to Inequality (\ref{7}) we must test lattices with packing radius upper bounded by $49$ while inequality (\ref{6}) shows the packing radius must be upper bounded by $833$, otherwise we had packing and covering densities greatest and smallest that the maximum and minimum possible values ($0,9069$ e $1.2092$, e.g., \cite{SloaneLivro})
 in this dimension, respectively. On the right table, the third, fourth and fifth columns represent bounds for the packing and covering densities of the quasi-perfect lattices with packing radius $r_2$ obtained from Inequalities (\ref{6}), (\ref{7}) and (\ref{8}), respectively. From Inequality (\ref{7}) the packing radius must be upper bounded by $74$ while Inequality (\ref{8}) it must be upper bounded by $196$. In this case Inequality (\ref{7}) is more appropriated.


\begin{table}[h]
\small
\begin{tabular}{|c|c|c|c|}
\hline & & & \\
 $r_2^2$ & $\mu(2,2,r_2)$ & $\overline{\Delta_2^2}\geq$  & $\overline{\Theta _2^2}\leq$  \\ \hline \hline
 41 & 137 & 0.6418 & 1.1593 \\
 45 & 145 & 0.6549 & 1.1914 \\
 \bf 49 & 149 & 0.6667 & \bf 1.2524 \\
 50 & 161 & 0.6694 & 1.1805 \\
 52 & 169 & 0.6747 & 1.1655 \\
 829 & 2601 & 0.9064 & 1.0511 \\
 832 & 2609 & 0.9066 & 1.0516 \\
 \bf 833 & 2617 & \bf 0.9066 & 1.0496 \\
 841 & 2629 & 0.9071 & 1.0546 \\
 842 & 2637 & 0.9071 & 1.0526 \\ \hline
 \end{tabular}
\hfill
\small
\begin{tabular}{|c|c|c|c|c|}
\hline  & & & & \\
 $r_2^2$ & $\mu(2,2,r_2)$ & $\overline{\Delta_2^2}\geq$  & $\overline{\Theta _2^2}\leq$  & $\overline{\Theta _2^2}\leq$  \\ \hline \hline
 72 & 225 & 0.716 & 1.195 & 1.3683 \\
 73 & 233 & 0.7176 & 1.1685 & 1.3371 \\
 \bf 74 & 241 & 0.7193 & \bf 1.2143 & 1.3079 \\
 80 & 249 & 0.7284 & 1.1889 & 1.3538 \\
 81 & 253 & 0.7298 & 1.1835 & 1.3467 \\
 193 & 601 & 0.8156 & 1.1197 & 1.2247 \\
 194 & 609 & 0.8161 & 1.1158 & 1.2143 \\
 \bf 196 & 613 & 0.8169 & 1.1139 & \bf 1.2177 \\
 197 & 621 & 0.8174 & 1.1155 & 1.2076 \\
 200 & 633 & 0.8186 & 1.1048 & 1.2011 \\ \hline
\end{tabular}
\label{comparison2}
\caption{Some bounds obtained from Inequalities (\ref{6}), (\ref{7}) and (\ref{8}).}
\end{table}

The {\it discrete packing density} of a lattice $\Lambda$ in $\mathbb{Z}^{n}$ in the $\ell_p$ metric is given by $\Delta_p^{n}(\Lambda) = \frac{\mu(n,p,r_p)}{det \Lambda}$. The {\it discrete covering density} of a lattice $\Lambda$ in $\mathbb{Z}^{n}$ is given by $\Theta_p^{n}(\Lambda) = \frac{\mu(n,p,R_p)}{det \Lambda}.$

\begin{example} Table \ref{tab:24} shows some parameters of all integer lattices with volume $M=24$ in the $\ell_2$ metric. Here $\Delta_p = \Delta_p^{2}(\Lambda)$,  $\overline{\Delta_p} = \overline{\Delta_p^{2}}(\Lambda)$,  $\Theta_p = \Theta_p^{2}(\Lambda)$ and  $\overline{\Theta_p} = \overline{\Theta_p^{2}}(\Lambda)$ denote discrete packing density, packing density, discrete covering density and covering density of the lattice $\Lambda$, respectively. In this case we have only one quasi-perfect lattice up to congruence.
\end{example}
	\begin{table}[!htb]
\begin{footnotesize}
\begin{tabular}{|cccccccccc|} \hline
Lattice & $t$ & $r_2$ & $\overline{r_2}$ & $R_2$ & $\overline{R_2}$ & $\Delta_2$ & $\overline{\Delta_2^2}$ & $\Theta_2^2$ & $\overline{\Theta _2^2}$ \\ \hline

$ \left(
\begin{smallmatrix}
 1 & 0 \\
 0 & 24 \\
\end{smallmatrix}
\right)$ & 58 & 0 & 0.5 & 12 & 12.0104 & 0.0417 & 0.0327 & 18.375 & 18.8823 \\
$ \left(
\begin{smallmatrix}
 1 & 1 \\
 0 & 24 \\
\end{smallmatrix}
\right)$ & 32 & 0 & 0.7071 & 8.4853 & 8.5147 & 0.0417 & 0.0654 & 9.375 & 9.4902 \\
$ \left(
\begin{smallmatrix}
 1 & 2 \\
 0 & 24 \\
\end{smallmatrix}
\right)$ & 14 & 1 & 1.118 & 5.3852 & 5.4271 & 0.2083 & 0.1636 & 4.0417 & 3.8554 \\
 $ \left( 
\begin{smallmatrix}
 1 & 3 \\
 0 & 24 \\
\end{smallmatrix}
\right) $ & 7 & 1.4142 & 1.5811 & 4 & 4.0139 & 0.375 & 0.3272 & 2.0417 & 2.1089 \\
 $ \left( 
\begin{smallmatrix}
 1 & 4 \\
 0 & 24 \\
\end{smallmatrix}
\right) $ & 4 & 2 & 2.0616 & 3.1623 & 3.3001 & 0.5417 & 0.5563 & 1.5417 & 1.4256 \\
 $\boldmath{ \left( 
\begin{smallmatrix}
 1 & 5 \\
 0 & 24 \\
\end{smallmatrix}
\right)} $ & \bf 1 & \bf 2.2361 & \bf 2.5495 & \bf 2.8284 & \bf 3.0641 & \bf 0.875 & \bf 0.8508 & \bf 1.0417 & \bf 1.229 \\
 $ \left( 
\begin{smallmatrix}
 1 & 6 \\
 0 & 24 \\
\end{smallmatrix}
\right) $ & 5 & 1.4142 & 2 & 3.1623 & 3.4004 & 0.375 & 0.5236 & 1.5417 & 1.5135 \\
 $ \left( 
\begin{smallmatrix}
 1 & 7 \\
 0 & 24 \\
\end{smallmatrix}
\right) $ & 4 & 2 & 2.1213 & 3.1623 & 3.5355 & 0.5417 & 0.589 & 1.5417 & 1.6362 \\
 $ \left( 
\begin{smallmatrix}
 1 & 8 \\
 0 & 24 \\
\end{smallmatrix}
\right) $ & 8 & 1.4142 & 1.5 & 4.1231 & 4.1552 & 0.375 & 0.2945 & 2.375 & 2.2601 \\
 $ \left( 
\begin{smallmatrix}
 1 & 9 \\
 0 & 24 \\
\end{smallmatrix}
\right) $ & 4 & 2 & 2.1213 & 3.1623 & 3.2596 & 0.5417 & 0.589 & 1.5417 & 1.3908 \\
 $ \left( 
\begin{smallmatrix}
 1 & 10 \\
 0 & 24 \\
\end{smallmatrix}
\right) $ & 4 & 2 & 2.2361 & 3.1623 & 3.3657 & 0.5417 & 0.6545 & 1.5417 & 1.4828 \\
 $ \left( 
\begin{smallmatrix}
 1 & 11 \\
 0 & 24 \\
\end{smallmatrix}
\right) $ & 10 & 1 & 1.4142 & 4.2426 & 4.3605 & 0.2083 & 0.2618 & 2.5417 & 2.4889 \\
 $ \left( 
\begin{smallmatrix}
 1 & 12 \\
 0 & 24 \\
\end{smallmatrix}
\right) $ & 18 & 0 & 1 & 6 & 6.0417 & 0.0417 & 0.1309 & 4.7083 & 4.7781 \\
 $ \left( 
\begin{smallmatrix}
 2 & 0 \\
 0 & 12 \\
\end{smallmatrix}
\right) $ & 19 & 0 & 1 & 6.0828 & 6.0828 & 0.0417 & 0.1309 & 5.0417 & 4.8433 \\
 $ \left( 
\begin{smallmatrix}
 2 & 2 \\
 0 & 12 \\
\end{smallmatrix}
\right) $ & 11 & 1 & 1.4142 & 4.4721 & 4.4721 & 0.2083 & 0.2618 & 2.875 & 2.618 \\
 $ \left( 
\begin{smallmatrix}
 2 & 3 \\
 0 & 12 \\
\end{smallmatrix}
\right) $ & 5 & 2 & 1.8028 & 3.6056 & 3.6336 & 0.5417 & 0.4254 & 1.875 & 1.7283 \\
 $ \left( 
\begin{smallmatrix}
 2 & 4 \\
 0 & 12 \\
\end{smallmatrix}
\right) $ & 4 & 2 & 2.2361 & 3.1623 & 3.1623 & 0.5417 & 0.6545 & 1.5417 & 1.309 \\
 $ \left( 
\begin{smallmatrix}
 2 & 6 \\
 0 & 12 \\
\end{smallmatrix}
\right) $ & 5 & 1.4142 & 2 & 3.1623 & 3.3333 & 0.375 & 0.5236 & 1.5417 & 1.4544 \\
 $ \left( 
\begin{smallmatrix}
 3 & 0 \\
 0 & 8 \\
\end{smallmatrix}
\right) $ & 8 & 1.4142 & 1.5 & 4.1231 & 4.272 & 0.375 & 0.2945 & 2.375 & 2.3889 \\
 $ \left( 
\begin{smallmatrix}
 3 & 4 \\
 0 & 8 \\
\end{smallmatrix}
\right) $ & 2 & 2.2361 & 2.5 & 3 & 3.125 & 0.875 & 0.8181 & 1.2083 & 1.2783 \\
 $ \left( 
\begin{smallmatrix}
 4 & 0 \\
 0 & 6 \\
\end{smallmatrix}
\right) $ & 6 & 1.4142 & 2 & 3.6056 & 3.6056 & 0.375 & 0.5236 & 1.875 & 1.7017 \\ \hline
\end{tabular}
\end{footnotesize}
\label{tab:24}
\caption{Codes in $\mathbb{Z}^2$, $p = 2$ and their respective degree of imperfection $t$.}
\end{table}

\section{Families of $t$-imperfect lattices in the $\ell_p$ metric} \label{ExplicitConstruction}

In this section we present some families of lattices  in the $\ell_p$ metric and calculate their imperfection degrees and discrete packing densities. In some cases the imperfection degree ranges with the packing radius.

\begin{proposition} \label{FormatA} Let $r>1$ an integer and $p$ an integer such that $p \geq \frac{\ln 2}{\ln
\left(\frac{r}{r -1}\right)}$. The lattice $\Lambda_r$ with basis $\{(r, 2r-1),(2r,-1)\}$ is quasi-perfect for $r=2$ and $r=3$ and $(r-2)$-imperfect for $r \geq 3$. It has discrete packing density
$\Delta_p^{2}(\Lambda_r) = \displaystyle\frac{(2r-1)^2+4}{4r^{2}-r}$. 
\end{proposition}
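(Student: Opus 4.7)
The plan is to compute the determinant, reduce the ball to canonical residues to establish the packing radius, and then identify and measure the uncovered residues to pin down the covering radius and the imperfection degree. A direct evaluation gives $\det \Lambda_r = |r(-1) - (2r-1)(2r)| = 4r^2 - r$; the unimodular replacement $(2r,-1) \mapsto (2r,-1) - 2(r, 2r-1) = (0, -(4r-1))$ puts the basis into Hermite form $\{(r, 2r-1), (0, 4r-1)\}$, so $\{(c_1, c_2) : 0 \leq c_1 < r,\ 0 \leq c_2 < 4r-1\}$ is a transversal for $\mathbb{Z}^2/\Lambda_r$. The hypothesis on $p$ is exactly what Proposition~\ref{pontos}(i) needs for $n=2$: thus $B_p^2(r) = B_\infty^2(r-1) \cup \{(\pm r, 0), (0, \pm r)\}$ and $\mu(2,p,r) = (2r-1)^2 + 4$.

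I next reduce each of the $\mu(2,p,r)$ points of $B_p^2(r)$ to the transversal, splitting the inner square into the sub-parts with $a \geq 0$ and $a < 0$ (the latter needing one subtraction of $(r, 2r-1)$) and treating the four axial extremes separately. A direct case analysis shows the reductions are pairwise distinct, which both proves $r_p(\Lambda_r) \geq r$ and gives the discrete packing density $\Delta_p^2(\Lambda_r) = ((2r-1)^2 + 4)/(4r^2 - r)$, and simultaneously identifies the $3r - 5$ missing residues as $(0, c_2)$ for $c_2 \in \{r+1, \dots, 2r-2\} \cup \{2r+1, \dots, 3r-2\}$ together with $(c_1, 3r-1)$ for $c_1 \in \{1, \dots, r-1\}$. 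For the reverse inequality $r_p(\Lambda_r) \leq r$, the next value of $D_{p,2}$ above $r$ is $(r^p+1)^{1/p}$, and the integer points $(-r, 0)$ and $(r, -1)$ both lie in $B_p^2((r^p+1)^{1/p})$ while differing by the lattice vector $(2r, -1)$, obstructing packing at this radius.

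For the covering radius I measure the $\ell_p$-distance from each uncovered residue to $\Lambda_r$. The inequality $2(r-1)^p \leq r^p$ granted by the hypothesis implies that the nearest lattice vector to any such residue lies in the short list $\{(0,0),\ \pm(r, 2r-1),\ \pm(-r, 2r),\ \pm(0, 4r-1)\}$, and a direct comparison of $\ell_p$-norms shows the largest resulting minimum distance is $(r^p + (r-2)^p)^{1/p}$ for $r \geq 3$ (achieved at the residue $(0, r+1)$), while the single missing residue $(1, 5)$ in the case $r=2$ lies at distance $(1 + 2^p)^{1/p}$. Hence $R_p(\Lambda_r) = (r^p + (r-2)^p)^{1/p}$ when $r \geq 3$ and $(r^p + 1)^{1/p}$ when $r = 2$. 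Finally, the elements of $D_{p,2}$ in $[r_p, R_p)$ correspond to integers $m = a^p + b^p$ with $r^p \leq m < r^p + (r-2)^p$; the hypothesis yields both $2(r-1)^p \leq r^p$ (forcing $\max(a,b) \geq r$) and $(r+1)^p > r^p + (r-2)^p$ (forcing $\max(a,b) = r$), so the admissible $m$'s are exactly $r^p + b^p$ for $b \in \{0, 1, \dots, r-3\}$, giving $r - 2$ distinct values. This proves $(r-2)$-imperfection for $r \geq 3$, quasi-perfection at $r = 3$, and a direct inspection for $r = 2$ gives $t = 1$. The main technical hurdle is verifying the short-vector claim for the nearest lattice point of every uncovered residue; this reduces to elementary monotonicity arguments in the $\ell_p$-norm together with the two explicit inequalities on $p$ just mentioned.
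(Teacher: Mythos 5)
Your proposal is correct, and it reaches the conclusion by a genuinely more systematic route than the paper. The paper's proof is a witness-point argument: it exhibits the single uncovered point $(r,r-2)$, shows it becomes covered at radius $(r^p+1)^{1/p}$ exactly when $r\in\{2,3\}$, and for $r\ge 3$ counts the chain $r^p<r^p+1<\dots<r^p+(r-2)^p<(r+1)^p$ using the same two inequalities you invoke ($2(r-1)^p\le r^p$ from the hypothesis on $p$, and $(r+1)^p\ge r^p+(r-1)^p$ from convexity); it never explicitly verifies that radius $r$ actually packs, nor that radius $(r^p+(r-2)^p)^{1/p}$ covers \emph{all} of $\mathbb{Z}^2$ rather than just the one witness. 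Your Hermite-form/transversal computation fills exactly these gaps: the injectivity of the reduction of the $(2r-1)^2+4$ ball points certifies the packing at radius $r$ (your colliding pair $(-r,0)$, $(r,-1)$ then pins $r_p$ down from above), and the explicit list of the $3r-5$ uncovered cosets lets you take the maximum of their distances to $\Lambda_r$ and so actually establish $R_p=(r^p+(r-2)^p)^{1/p}$. (I checked your coset list and the extremal residue $(0,r+1)$ against $r=3,4$ and the counts are right, including the $r=2$ degenerate case where the witness $(r,r-2)=(2,0)$ used in the paper lies in $B_p^2(r)$ and your residue $(1,5)$ is the correct replacement.) The one place where your write-up matches the paper's level of terseness rather than improving on it is the ``short list of nearest lattice vectors'' step, which you assert via monotonicity rather than carry out; a one-line bound such as ``any lattice vector with first coordinate outside $\{-r,0,r\}$ is at $\ell_p$-distance at least $r+1>(r^p+(r-2)^p)^{1/p}$ from the relevant residues'' would close it. Net effect: same key lemmas (Proposition \ref{pontos}(i) and the two inequalities on $p$), but your decomposition by cosets buys a complete determination of both radii where the paper only tracks one point.
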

\begin{proof} $\Lambda_r$ is not perfect because $(r,r-2) \not\in \bigcup_{\cc \in \Lambda} \cc+ B_{p}^{2}(r)$. If $r_1 = \sqrt[p]{r^{p}+1}$, then $r_1 \in D_{p,2}$, $r \leq r_1$ and $(r,r-2) \in  \bigcup_{\cc \in \Lambda_r} \cc+ B_{p}^{2}(r_1)$ if and only if $r \in \{2,3\}$. If $r\geq 3$, then $\sqrt[p]{r^p}<\sqrt[p]{r^p+1}<\sqrt[p]{r^p+2^p}<\ldots<\sqrt[p]{r^p+(r-2)^p}<\sqrt[p]{(r+1)^p}.$ The last inequality follows from the fact $p \geq \frac{\ln 2}{\ln
\left(\frac{r}{r -1}\right)}$ implies $2(r-1)^{p} \leq r^{p}$ and that $x^{p}$ is a concave upward function when $p>2$ and $x>0$. Indeed, a convex combination of the image of $r-2$ and	
$r+1$ is greather than the image of $r$, that is,
\[\begin{split} & \frac{1}{2}(r-1)^p+\frac{1}{2}(r+1)^p \geq  r^p\Rightarrow (r+1)^p\geq 2r^p-(r-1)^p=r^p+r^p-(r-1)^p \\& {\geq }r^p +2 (r-1)^p-(r-1)^p=r^p+(r-1)^p\geq r^p+(r-2)^p. \end{split}\]	
From Proposition \ref{pontos}, $\mu(2,p,r) = (2r-1)^{2}+4$ and since $\det(\Lambda_r) = 4r^{2}-r$, the discrete packing density is $\Delta_p^{2}(\Lambda_r) = \displaystyle\frac{(2r-1)^2+4}{4r^{2}-r}$. \end{proof}

\begin{table}[h!]
	\caption{The minimum value of $p$ such that  $\frac{\ln 2}{\ln \left(\frac{r}{r -1}\right)} \leq p$.} 
	\centering 
	\begin{tabular}{|c|c|c|c|c|c|c|c|c|c|c|c|c|c|c|} 
		\hline 
	$r$ & $2$ & $3$ & $4$ & $5$ & $6$ & $7$ & $8$ & $9$ & $10$ & $11$ & $12$ & $13$ & $14$  \\  
	\hline 
		$p$ & $1$ & $2$ & $3$ & $4$ & $4$ & $5$ & $6$ & $6$ & $7$ & $8$ & $8$ & $9$ & $10$ 
\\		
		 \hline

\end{tabular}
	\label{table1} 
\end{table}

\begin{example} The lattice $\Lambda_3$ with basis $\{(3,5),(6,-1)\}$  described in Proposition \ref{FormatA} is quasi-perfect in the $\ell_p$-metric for $p \geq 2$. $\Lambda_3$ is congruent to the lattice with basis $\{(1,6), (0,33)\}$ described in  the set ${\mathcal A}$ in Section \ref{Algorithm}. 
The lattice $\Lambda_4$ with basis  $\{(4,7),(8,-1)\}$  described in Proposition \ref{FormatA} is $2$-imperfect in the $\ell_p$-metric for $p \geq 3$. 
	 \begin{figure}[h!]
	 	\begin{minipage}[b]{0.2\linewidth}
	 		\centering
	 		\includegraphics[scale=0.2]{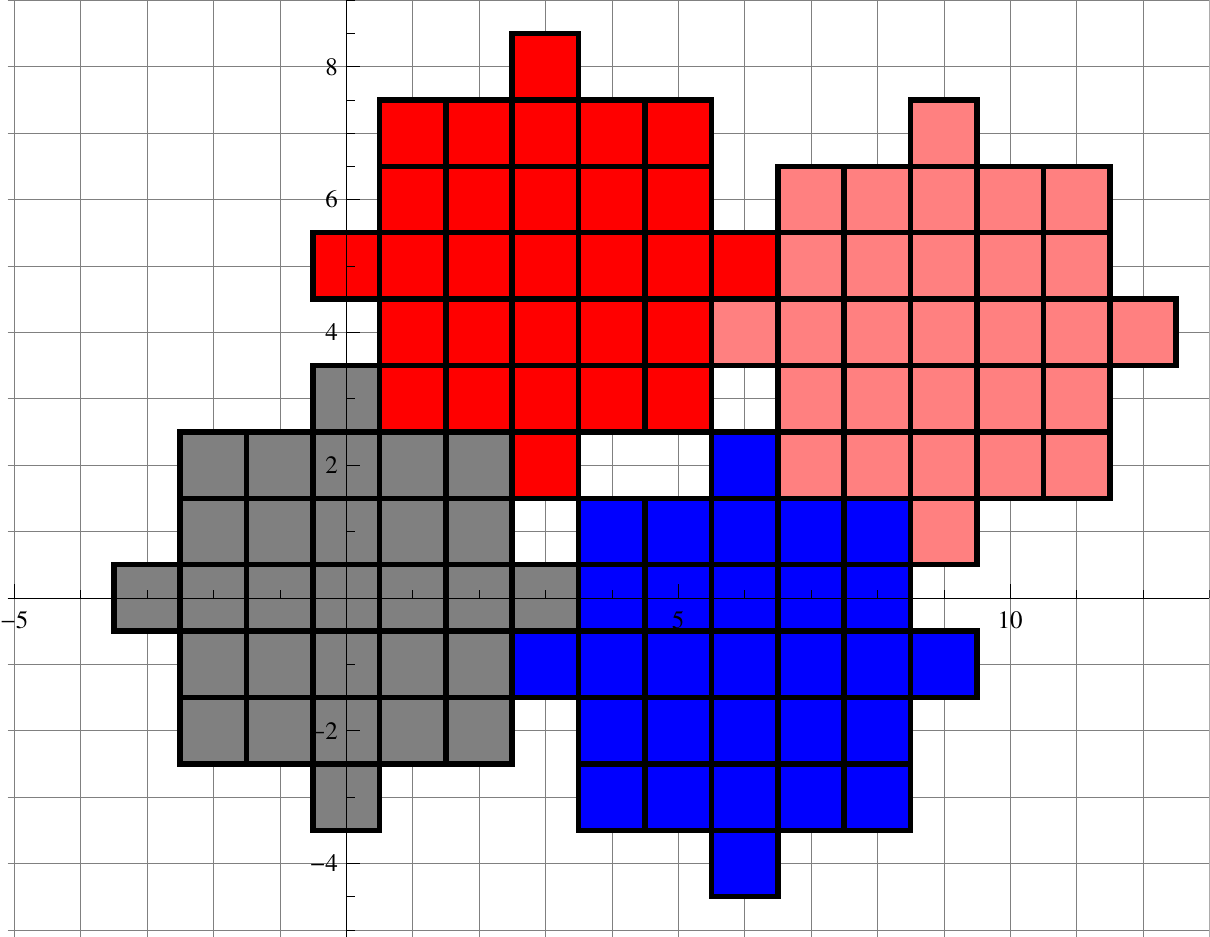}
	 	\end{minipage}
	 	\begin{minipage}[b]{0.25\linewidth}
	 		\centering
	 		\includegraphics[scale=0.2]{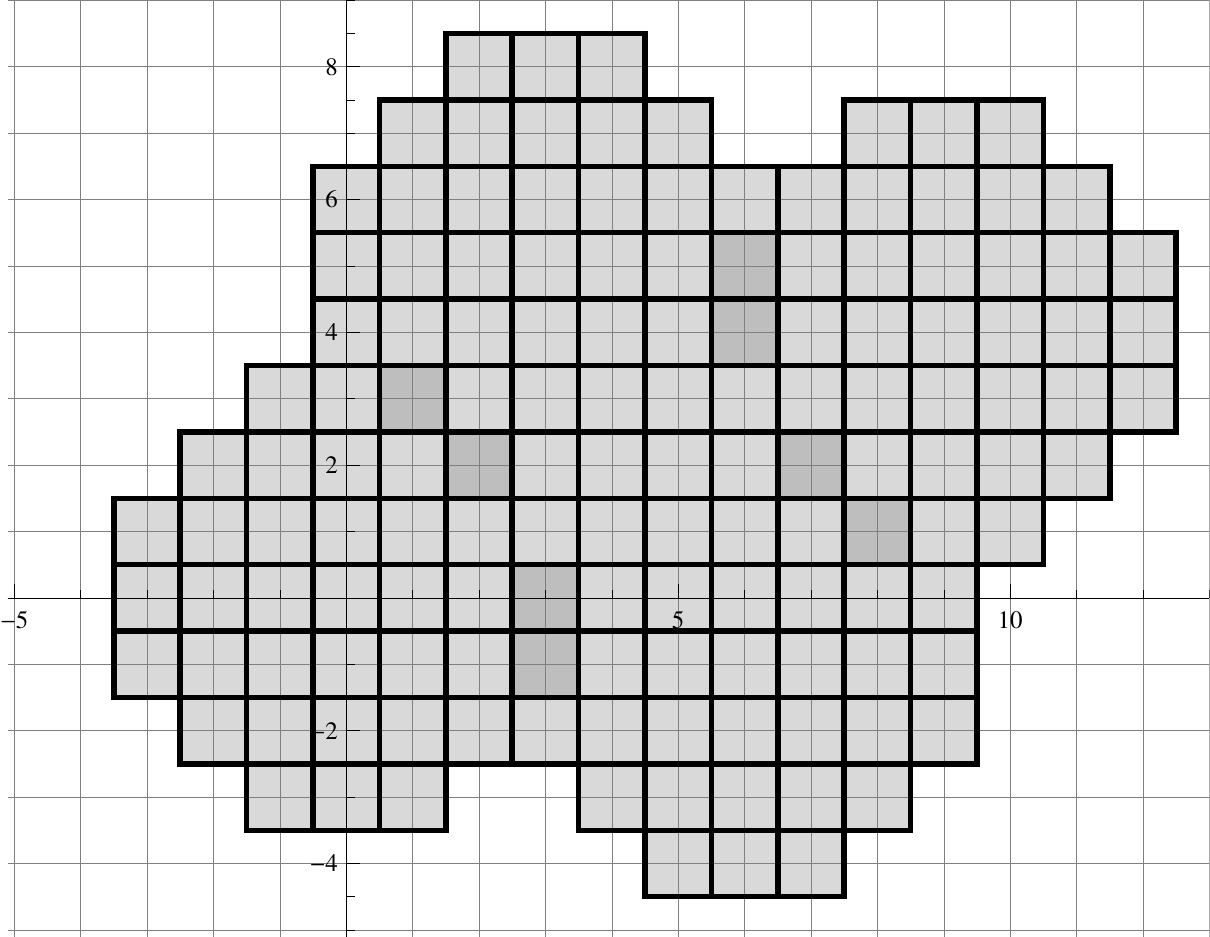}
	 	\end{minipage}
	 	\begin{minipage}[b]{0.25\linewidth}
	 		\centering
	 		\includegraphics[scale=0.2]{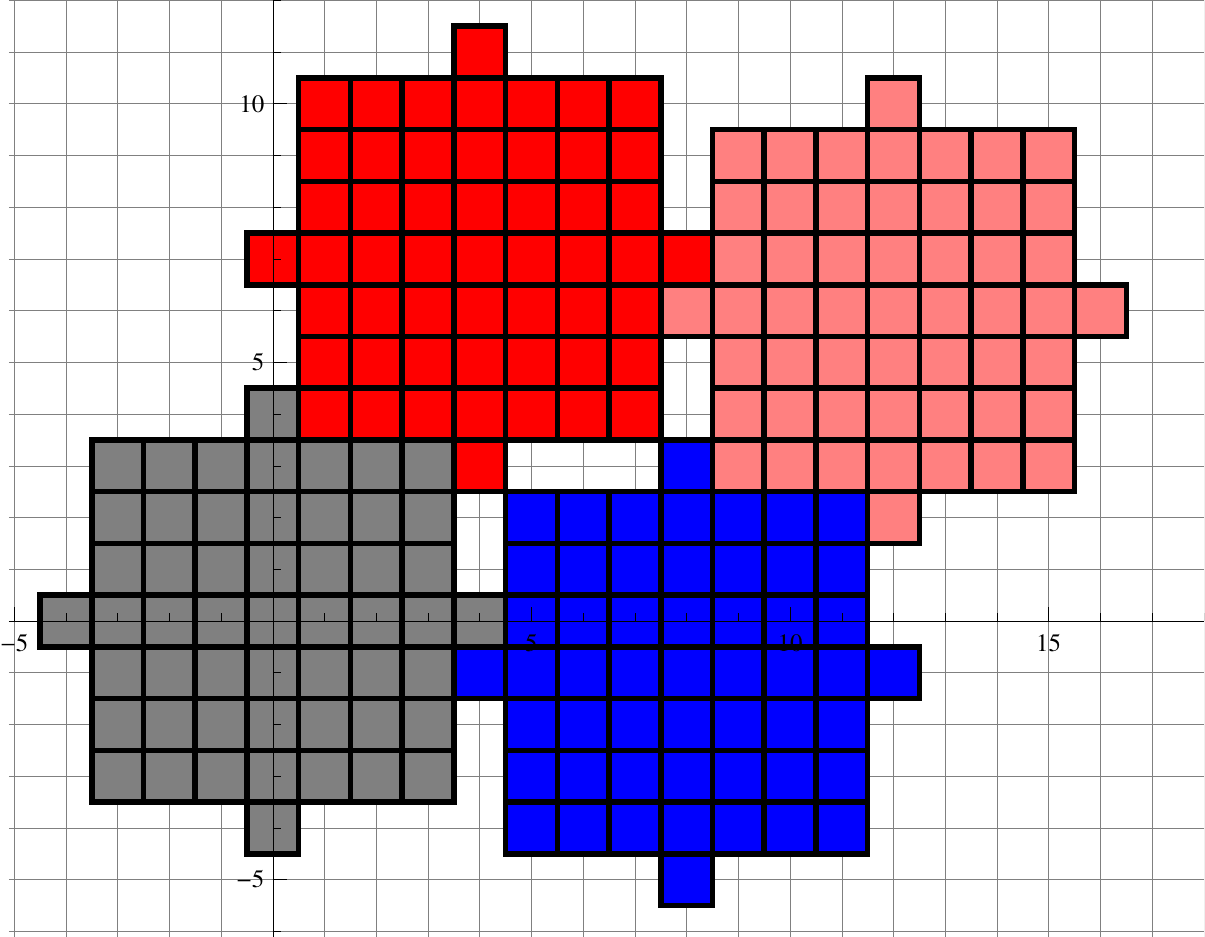}
	 	\end{minipage}
	 	\begin{minipage}[b]{0.2\linewidth}
	 		\centering
	 		\includegraphics[scale=0.2]{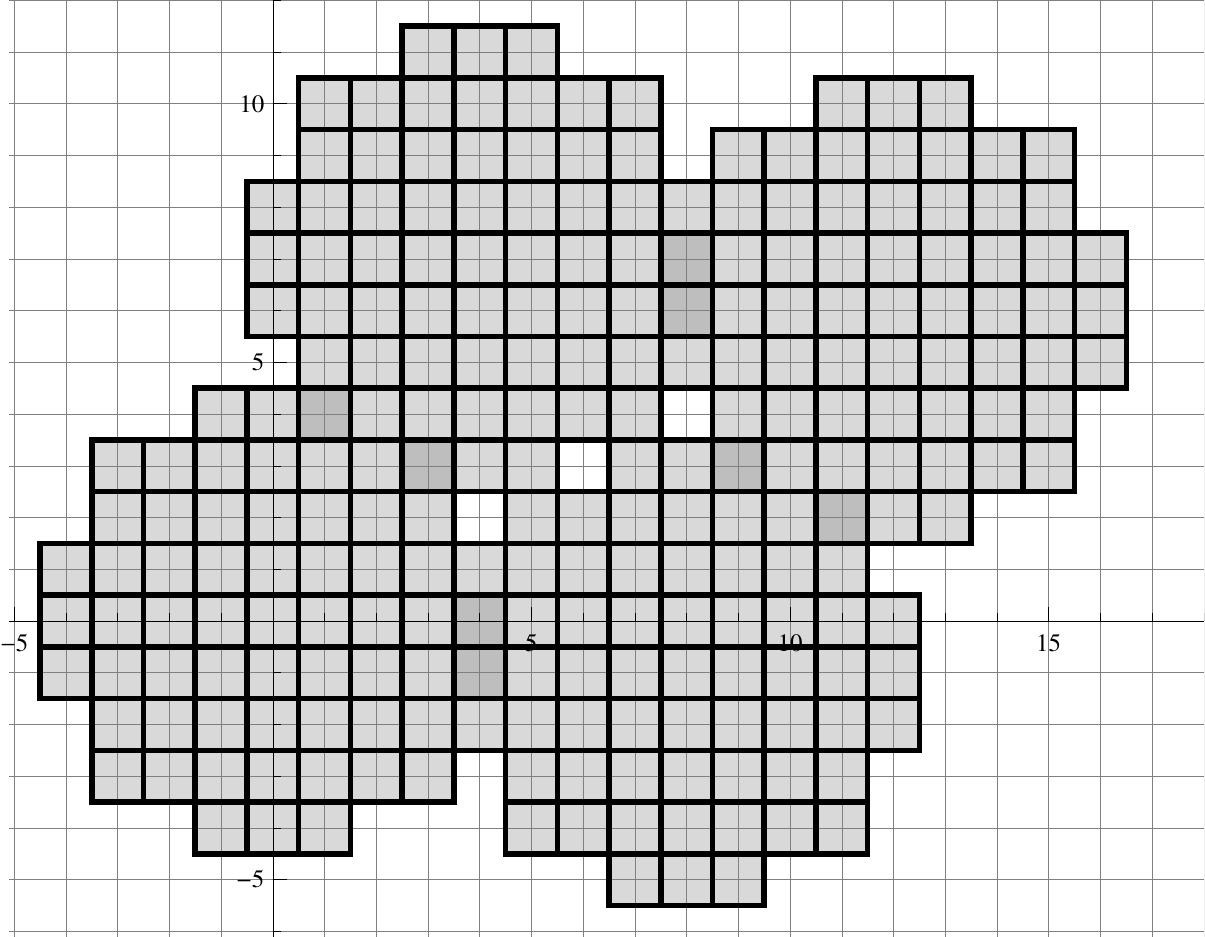}
	 	\end{minipage}
	 	\caption{From the left to the right the polyominoes associated to the balls $B_p^{2}(r)$ for $(r,p) = (3,2)$, $(5^{1/2},2)$, $(4,3)$ and $((4^{3}+1)^{1/3},3)$, respectively.}
	 	\label{Fig4}
	 \end{figure}
	 \end{example}

\begin{proposition}\label{FormatB} Let $r$ be an integer, $p< \frac{\ln 2}{\ln
		\left(\frac{r}{r -1}\right)}$ and $(r-1)^{p} + (r-2)^{p} \leq r^{p}$. The lattice $\Lambda_r$ with basis $\{(r-1, 2r-1),(2r,-1)\}$ is $(r-1)$-imperfect. It has discrete packing density $\Delta_p^{2}(\Lambda_r) = \displaystyle\frac{(2r-1)^{2}}{4r^{2}-r-1}$. 
\end{proposition}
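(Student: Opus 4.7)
The plan is to invoke Proposition \ref{pontos}(ii), which under the present hypotheses describes $T_p^2(r)$ as the $\ell_\infty$ square $T_\infty^2(r-1)$ with four ``pole'' cubes at $\pm r\bm{e}_1,\pm r\bm{e}_2$ adjoined and four ``corner'' cubes at $(\pm(r-1),\pm(r-1))$ removed, giving $\mu(2,p,r)=(2r-1)^2$. A direct computation yields $|\det\Lambda_r|=|(r-1)(-1)-(2r-1)(2r)|=4r^2-r-1$. To see that $r_p(\Lambda_r)=r$, note that $(2r-1)^2<4r^2-r-1$ makes packing consistent on a volume basis; to rule out actual overlap one enumerates the short lattice vectors $\pm(r-1,2r-1)$, $\pm(2r,-1)$, $\pm(r+1,-2r)$ and verifies, using the octagonal shape from Proposition \ref{pontos}(ii), that for each such $\vv$ no integer point belongs simultaneously to $B_p^2(\bm{0},r)$ and $B_p^2(\vv,r)$. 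The discrete packing density $\Delta_p^2(\Lambda_r)=(2r-1)^2/(4r^2-r-1)$ then follows immediately.

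For the covering radius I would exhibit the hole $\xx_0=(r,r-2)$ and compute its distances to the nearest lattice points: $(r^p+(r-2)^p)^{1/p}$ to $\bm{0}$, $(r^p+(r-1)^p)^{1/p}$ to $(2r,-1)$, and $(1+(r+1)^p)^{1/p}$ to $(r-1,2r-1)$. Since $x\mapsto x^p$ is convex for $p\geq 1$ its forward differences are non-decreasing, so the hypothesis $(r-2)^p\leq r^p-(r-1)^p$ combined with $r^p-(r-1)^p\leq(r+1)^p-r^p$ yields $r^p+(r-2)^p\leq(r+1)^p<1+(r+1)^p$; the other comparison is trivial. Hence $\bm{0}$ is the closest lattice point to $\xx_0$ and $R_p(\Lambda_r)\geq(r^p+(r-2)^p)^{1/p}$.

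To enumerate $D_{p,2}\cap[r,(r^p+(r-2)^p)^{1/p})$, the pairs $(r,k)$ for $k=0,\ldots,r-3$ contribute the $r-2$ values $(r^p+k^p)^{1/p}$, and the pair $(r-1,r-1)$ contributes $(2(r-1)^p)^{1/p}$, which lies in the interval because $2(r-1)^p>r^p$ (the hypothesis $p<\ln 2/\ln(r/(r-1))$) and $2(r-1)^p<r^p+(r-2)^p$ (convexity, strict for $p>1$). Any other pair $(a,b)$ with $\max(a,b)\leq r-1$ is dominated by $(r-1,r-2)$ and hence by hypothesis has $a^p+b^p\leq r^p$, falling outside the interval, and any pair with $\max(a,b)\geq r+1$ exceeds the upper bound. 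This gives exactly $r-1$ elements. The main obstacle is to upgrade the lower bound on $R_p$ to equality, i.e., to verify that each of the $3r-2$ cosets of $\mathbb{Z}^2/\Lambda_r$ lying outside $T_p^2(r)$ admits a representative within $\ell_p$ distance $(r^p+(r-2)^p)^{1/p}$ of some lattice point --- a finite case analysis whose style parallels the computations of Proposition \ref{FormatA}.
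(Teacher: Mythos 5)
Your route is the same as the paper's: exhibit $(r,r-2)$ as the deep hole, take $R_p=\sqrt[p]{r^p+(r-2)^p}$, and count $D_{p,2}\cap[r,R_p)$ as the $r-2$ values $\sqrt[p]{r^p+k^p}$, $k=0,\dots,r-3$, together with $\sqrt[p]{2(r-1)^p}$, located inside the interval by exactly the two convexity inequalities the paper invokes ($2(r-1)^p>r^p$ from the hypothesis on $p$, and $2(r-1)^p<r^p+(r-2)^p$ from strict convexity of $x^p$); the density formula comes from Proposition \ref{pontos}(ii) in both cases. You are in fact more complete than the printed proof on the packing side: the determinant computation, the enumeration of the short vectors $\pm(r-1,2r-1)$, $\pm(2r,-1)$, $\pm(r+1,-2r)$, and the verification that $r_p$ is exactly $r$ do not appear in the paper at all. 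Two caveats. First, your count of $r-1$ elements tacitly assumes that $2(r-1)^p\neq r^p+k^p$ for every $k\in\{0,\dots,r-3\}$; since $D_{p,2}$ is a set of values, such a coincidence would merge two of your contributions and drop the degree to $r-2$. The paper explicitly flags this (``$r^p+i^p=2(r-1)^p$ \dots has no integer solution''), though it too only asserts it. Second, the step you defer --- upgrading $R_p\geq\sqrt[p]{r^p+(r-2)^p}$ to equality by checking that all $3r-2$ cosets of $\mathbb{Z}^2/\Lambda_r$ missed by $B_p^2(r)$ are reached at that radius --- is likewise absent from the paper, whose proof ends with the bare claim that this radius covers $(r,r-2)$; note also that this check grows with $r$, so it is not a single finite computation but needs a uniform argument. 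In structure, and in what it leaves unverified, your proposal matches the paper's argument.
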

\begin{proof} By hypothesis $2(r-1)^p > r^p \geq (r-1)^p+(r-2)^p.$ Then
$\sqrt[p]{r^p}<\sqrt[p]{r^p+1}<\sqrt[p]{r^p+2^p}<\ldots<\sqrt[p]{r^p+(r-2)^p}$. We also have $\sqrt[p]{2(r-1)^p}< \sqrt[p]{r^p+(r-2)^p}$ since the ball is convex. Moreover,
$r^p+i^p=2(r-1)^p$ for $i=0,1,\ldots,r-2$ has no integer solution. Hence, the radius $r+1$ covers the point $(r,r-2)$. The formula for the packing density follows from Proposition \ref{pontos} since $\mu(2,p,r) = (2r-1)^{2}$. 
\end{proof}

\begin{table}[h!]
	\caption{Values of $p$ such that  $\frac{\ln 2}{\ln \left(\frac{r}{r -1}\right)} \leq p$ and $(r-1)^{p} + (r-2)^{p} \leq r^{p}$.} 
	\centering 
	\begin{tabular}{|c|c|c|c|c|c|c|c|c|c|c|c|c|c|} 
		\hline 
		$r$ &  $3$ & $4$ & $5$ & $6$ & $7$ & $8$ & $9$ & $10$ & $11$ & $12$ & $13$ & $14$   \\  
		\hline 
		$p$ & $1$, $2$ & $2$ & $2$, $3$ & $3$, $4$ & $3$, $4$ & $4$, $5$ & $4$, $5$ & $5$, $6$ & $5$, $6$, $7$ & $6$, $7$ & $6$, $7$, $8$ & $7$, $8$, $9$  
		\\		
		\hline
		
	\end{tabular}
	\label{table2} 
\end{table}

\begin{proposition} \label{FormatC} If $r$ is not integer, $p< \frac{\ln 2}{\ln
		\left(\frac{r}{\lfloor r \rfloor}\right)}$, $\lfloor r \rfloor ^{p}+ \lfloor r-1 \rfloor ^{p} \leq r^{p}$ and $2 \lfloor r \rfloor^{p} \leq \lfloor r+1 \rfloor^{p}$,  then the lattice $\Lambda_r$ with basis $\{( 2 \lfloor r \rfloor +1,-1),(2\lfloor r \rfloor -1, 2 \lfloor r \rfloor)\}$ is quasi-perfect. It has discrete packing density
	$\Delta_p^{2}(\Lambda_r)=\displaystyle\frac{(2\lfloor r \rfloor +1)^{2}-4}{4\lfloor r\rfloor^2 +4 \lfloor r \rfloor -1}$. \end{proposition}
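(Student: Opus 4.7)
The plan is to mirror the strategy of Propositions~\ref{FormatA} and \ref{FormatB}: identify the polyomino shape, verify the packing by enumerating short lattice vectors, locate the uncovered integer points, and check that they are covered exactly at the next admissible radius. Throughout I write $k:=\lfloor r\rfloor$; the hypothesis $r\notin\mathbb{Z}$ rules out $p=1$ (otherwise $r\in D_{1,2}=\mathbb{Z}$), so $p>1$ and $x^p$ is strictly convex on $(0,\infty)$, a fact used several times below.

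First, Proposition~\ref{pontos}(iii) with $n=2$ applies directly: since $r\notin\mathbb{Z}$ we have $\lfloor r-1\rfloor=k-1$, so its hypotheses match and
\[
T_p^{2}(r) \;=\; T_\infty^{2}(k) \setminus \bigcup_{\bm{x}\in B}\!\Bigl(\bm{x}+\bigl[-\tfrac12,\tfrac12\bigr]^2\Bigr),\qquad B=\{(\pm k,\pm k)\},
\]
whence $\mu(2,p,r)=(2k+1)^2-4$. A determinant calculation gives $\det\Lambda_r=(2k+1)(2k)+(2k-1)=4k^2+4k-1$, which yields the stated packing density.

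Next I would verify the packing. Every nonzero $v\in\Lambda_r$ other than $\pm(2k-1,2k)$ has $\ell_\infty$-norm at least $2k+1$, so the bounding squares $T_\infty^{2}(k)$ of the two polyominoes do not even touch. For $v=\pm(2k-1,2k)$ the bounding squares do overlap, but only in a strip whose integer points are $(k-1,k)$ and $(k,k)$ (or their negatives for $-v$); a direct check shows that each of these is a removed corner of exactly one of the two polyominoes, so no integer point is shared. Hence $r_p\geq r$. To pin down $r_p=r$ exactly, observe that $(k-1,k)\in T_p^{2}(r)$ (since $(k-1)^p+k^p\leq r^p$ by hypothesis) while lying at $\ell_p$-distance precisely $2^{1/p}k$ from $(2k-1,2k)\in\Lambda_r$; any enlargement of the radius therefore creates an overlap.

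The ``holes'' are $(\pm k,\mp k)$: each lies at $\ell_p$-distance $2^{1/p}k$ from $\bm{0}$, and a short case analysis shows their distance to every other lattice vector strictly exceeds $r$. The only subtle cases are the neighbors $\pm(2k+1,-1)$ and $\pm(2,-2k-1)$, for which strict convexity of $x^p$ (applied to the pairs $(k+1,k-1)$ and $(k-2,k+1)$ against $(k,k)$ and $(k-1,k)$ respectively) gives $(k+1)^p+(k-1)^p>2k^p>r^p$ and $(k-2)^p+(k+1)^p>(k-1)^p+k^p\geq r^p$; all other lattice vectors are visibly farther. Thus the holes are uncovered at radius $r$ and are covered by $\bm{0}$ at radius $2^{1/p}k$, forcing $R_p=2^{1/p}k$. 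To conclude quasi-perfectness, I would show that $2^{1/p}k$ is the immediate successor of $r$ in $D_{p,2}$: if $(a^p+b^p)^{1/p}\in D_{p,2}$ exceeds $r$, then $\max(a,b)\geq k$ (otherwise $a^p+b^p\leq 2(k-1)^p<k^p+(k-1)^p\leq r^p$); if $\max(a,b)=k$ then $a^p+b^p\leq 2k^p$ (equality only at $a=b=k$), and if $\max(a,b)\geq k+1$ then $a^p+b^p\geq(k+1)^p\geq 2k^p$ by the second hypothesis. In all cases the next admissible value is exactly $2^{1/p}k$, giving $d(r_p,R_p)=1$. The main technical obstacle is the case analysis certifying that no short lattice vector covers a hole at radius $r$; both hypotheses of the proposition, together with strict convexity of $x^p$, enter crucially here to squeeze the packing and covering radii into consecutive shells of $D_{p,2}$.
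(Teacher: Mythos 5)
Your argument is correct and follows the same route as the paper, whose own proof is only a two-sentence sketch of precisely this idea (the holes are the corners $(\pm\lfloor r\rfloor,\pm\lfloor r\rfloor)$, and they are absorbed at the very next admissible radius $2^{1/p}\lfloor r\rfloor$ because $2\lfloor r\rfloor^p\le(\lfloor r\rfloor+1)^p$); you have simply supplied the packing verification and the successor-in-$D_{p,2}$ computation that the paper leaves implicit. One step is misstated: writing $k=\lfloor r\rfloor$, your chain $(k-2)^p+(k+1)^p>(k-1)^p+k^p\ge r^p$ has its last inequality backwards, since the hypothesis is $k^p+(k-1)^p\le r^p$; the bound you need follows instead from $(k-2)^p+(k+1)^p\ge(k+1)^p\ge 2k^p>r^p$, and in fact this case is vacuous because $(2,-2k-1)$ lies at $\ell_\infty$-distance $k+1>k$ from the hole $(k,-k)$, while $B_p^2(r_p)$ is contained in the $\ell_\infty$-ball of radius $k$.
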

\begin{proof} The proof follows from the fact that the corners $(\pm \lfloor r \rfloor, \pm \lfloor r \rfloor)$ have $p$-norm smaller or equal to the $p$-norm of the points $(\pm \lfloor r+1 \rfloor, 0)$.  The formula for the packing density follows from Proposition \ref{pontos} since $\mu(2,p,r) =(2\lfloor r \rfloor +1)^{2}-4$. \end{proof}

%


\begin{proposition} \label{FormatD} If $r$ is not integer, $\lfloor r \rfloor^{p}+ \lfloor r-1 \rfloor^{p} >  r^{p}$, $\lfloor r \rfloor^{p}+ \lfloor r-2 \rfloor^{p} \leq  r^{p}$  and $2 \lfloor r \rfloor^{p} \leq \lfloor r+1 \rfloor^{p}$, then the lattice $\Lambda_r$ with basis $\{( 2 \lfloor r \rfloor +1,-2),(2\lfloor r \rfloor -2, 2 \lfloor r \rfloor-1)\}$ is $2$-imperfect. It has discrete packing density
	$\Delta_p^{2}(\Lambda_r)=\displaystyle\frac{(2\lfloor r \rfloor +1)^{2}-12}{4\lfloor r\rfloor^2 +4 \lfloor r \rfloor -5}$.
\end{proposition}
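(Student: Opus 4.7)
The proof splits into a density computation and the verification that $\Lambda_r$ is $2$-imperfect. Directly from the basis,
$$\det \Lambda_r = (2\lfloor r\rfloor+1)(2\lfloor r\rfloor-1) + 2(2\lfloor r\rfloor-2) = 4\lfloor r\rfloor^2 + 4\lfloor r\rfloor - 5,$$
and the first two hypotheses are precisely those of Proposition \ref{pontos}(iv), yielding $\mu(2,p,r) = (2\lfloor r\rfloor+1)^2 - 12$. The ratio $\mu/\det$ gives the claimed $\Delta_p^{2}(\Lambda_r)$.

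For $2$-imperfection I would follow the template of Propositions \ref{FormatA}--\ref{FormatC}. Step one is to check that the translates of $B_p^2(r)$ under $\Lambda_r$ are pairwise disjoint; by enumerating the short lattice vectors (small integer combinations of $(2\lfloor r\rfloor+1,-2)$ and $(2\lfloor r\rfloor-2,2\lfloor r\rfloor-1)$) and invoking the convexity-of-$t\mapsto t^p$ argument from Proposition \ref{FormatA} together with the hypothesis $2\lfloor r\rfloor^p \leq (\lfloor r\rfloor+1)^p$, no integer point can lie in two distinct translates. Step two is to locate the six uncovered integer points per fundamental region (guaranteed by the identity $\det \Lambda_r - \mu(2,p,r) = 6$) among the orbits of the $12$ excluded points $D$ of Proposition \ref{pontos}(iv). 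Tracking the $\Lambda_r$-action on the near-corners $(\pm\lfloor r\rfloor,\pm \lfloor r-1\rfloor)$, $(\pm\lfloor r-1\rfloor,\pm\lfloor r\rfloor)$ and corners $(\pm\lfloor r\rfloor,\pm\lfloor r\rfloor)$, one verifies that exactly four of the near-corners and two of the corners escape every neighbouring ball; consequently $R_p = \lfloor r\rfloor\cdot 2^{1/p}$, while $r_p$ is the largest element of $\mathcal{D}_{p,2}$ not exceeding $r$.

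To conclude $d(r_p,R_p)=2$, it remains to show that the only elements of $\mathcal{D}_{p,2}$ in $[r_p,R_p)$ are $r_p$ itself and the near-corner distance $(\lfloor r\rfloor^p + \lfloor r-1\rfloor^p)^{1/p}$. Equivalently, no pair of non-negative integers $(a,b)$ satisfies $a^p+b^p \in (r^p,\,\lfloor r\rfloor^p + \lfloor r-1\rfloor^p) \cup (\lfloor r\rfloor^p + \lfloor r-1\rfloor^p,\,2\lfloor r\rfloor^p)$. This exclusion step is the main obstacle: I expect it to follow from a concavity/convexity argument on $t\mapsto t^p$ together with the three numerical hypotheses, in the spirit of the proof of Proposition \ref{FormatB}. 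The orbit bookkeeping of step two — showing the $4+2$ split of genuinely uncovered points among the $12$ geometric exclusions — is the other delicate piece.
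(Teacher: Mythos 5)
Your proposal follows the same route as the paper's own (very terse) proof: compute $\det\Lambda_r$ and $\mu(2,p,r)$ from Proposition~\ref{pontos}(iv), observe the deficit of $6$ integer points per fundamental domain, and show that exactly one element of $\mathcal{D}_{p,2}$ lies strictly between $r_p$ and $R_p=(2\lfloor r\rfloor^p)^{1/p}$, namely the near-corner norm $(\lfloor r\rfloor^p+\lfloor r-1\rfloor^p)^{1/p}$. The paper's proof consists only of asserting the ordering of the three relevant norms, so you are, if anything, more explicit about what must be checked. The step you flag as ``the main obstacle'' is genuine but does close by exactly the convexity argument you anticipate: writing $m=\lfloor r\rfloor$, the hypothesis $2m^p\le (m+1)^p$ forces any point with $\ell_p$-norm below $R_p$ to have both coordinates of absolute value at most $m$; if the larger coordinate $a$ satisfies $a\le m-1$ then $a^p+b^p\le 2(m-1)^p\le m^p+(m-2)^p\le r^p$ (the middle inequality by convexity of $t\mapsto t^p$, the last by the second hypothesis), so $a=m$ is forced, and then the first hypothesis leaves only $b\in\{m-1,m\}$, i.e., precisely the near-corner and corner norms. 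So no other value of $a^p+b^p$ can fall in $(r^p,2m^p)$ and $d(r_p,R_p)=2$ follows once the packing at radius $r$ and the coverage at the corner norm are verified. Your outline is therefore sound; to be a complete proof it still needs this exclusion written out together with the packing check and the $4+2$ orbit bookkeeping you mention, but none of these appear in the paper's own proof either.
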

\begin{proof} The proof follows from the fact that the points $(\pm \lfloor r-1 \rfloor, \pm \lfloor r \rfloor)$ have $p$-norm smaller or equal to the $p$-norm of $(\pm \lfloor r \rfloor, \pm \lfloor r \rfloor)$ and these points have $p$-norm smaller or equal to the $p$-norm of the point $(\pm \lfloor r+1 \rfloor, 0)$.  The formula for the packing density follows from Proposition \ref{pontos} since $\mu(2,p,r) =(2\lfloor r \rfloor +1)^{2}-12$. \end{proof}

%

\begin{example} The lattice $\Lambda_{5.2}$ with basis  $\{(9,-1),(7,8)\}$  is $2$-imperfect in the $\ell_p$ metric for $p=4$. The lattice $\Lambda_{3.2}$ with basis  $\{(4,5), (7,-1)\}$ is associated to  the same format of the polyominoes of Proposition \ref{FormatC} but it is quasi-perfect in the $\ell_p$-metric for $p =2$. 
	
	\begin{figure}[h!]
		\begin{minipage}[b]{0.3\linewidth}
			\centering
			\includegraphics[scale=0.25]{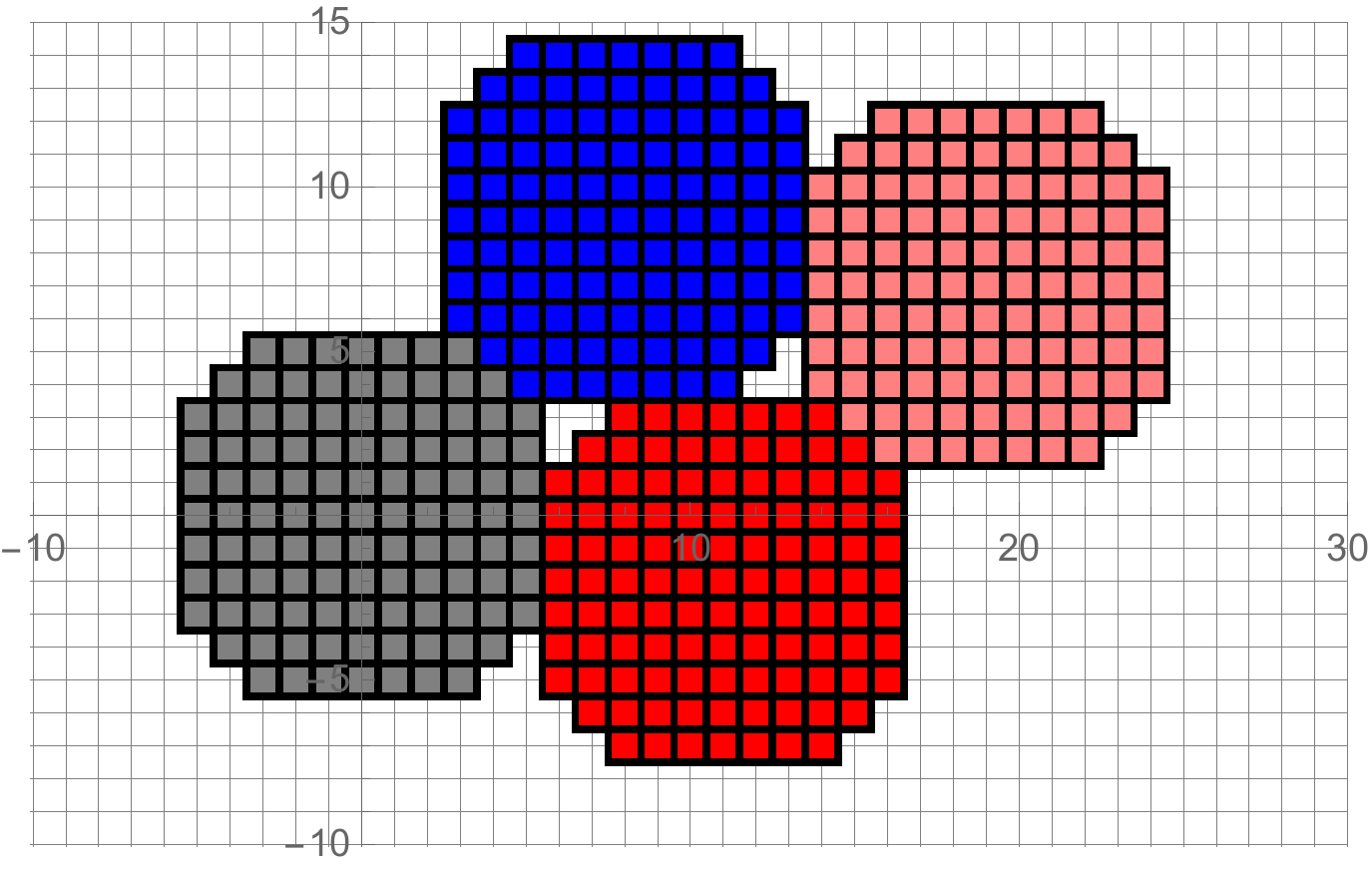}
		\end{minipage}
		\begin{minipage}[b]{0.3\linewidth}
			\centering
			\includegraphics[scale=0.25]{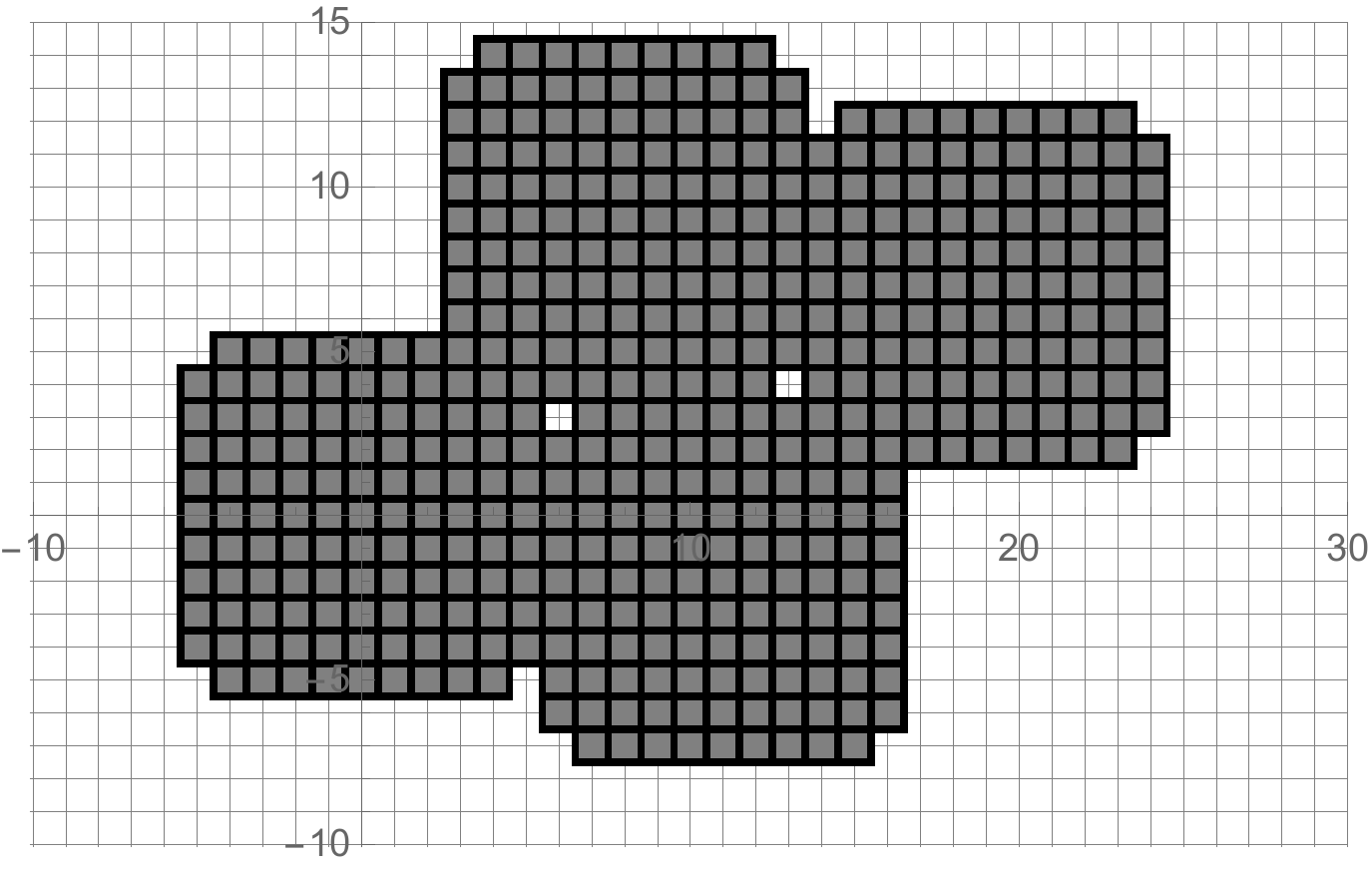}
		\end{minipage}
		\begin{minipage}[b]{0.25\linewidth}
			\centering
			\includegraphics[scale=0.21]{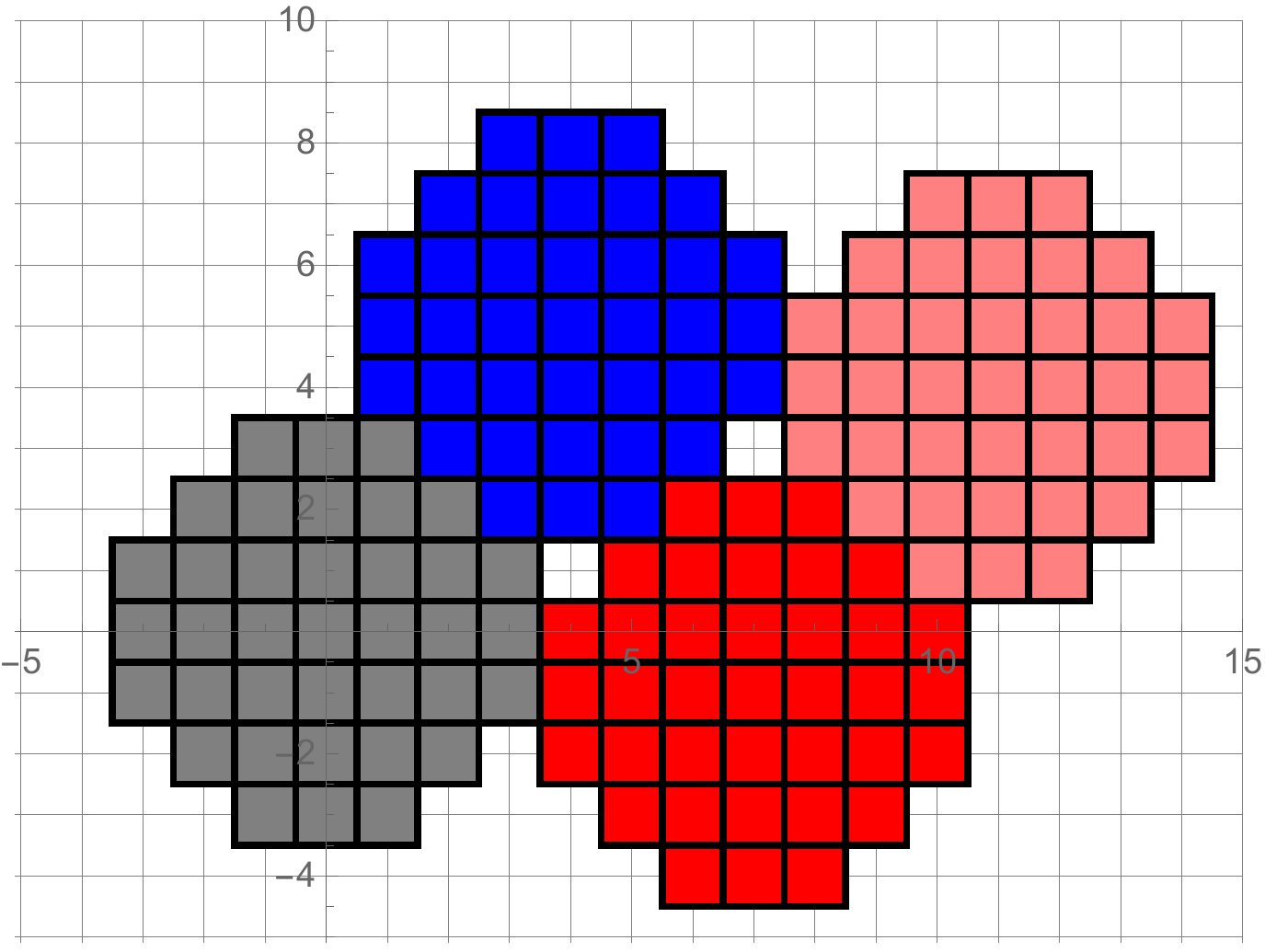}
		\end{minipage}
		\caption{From the left to the right the polyominoes associated to the balls $B_p^{2}(r)$ for $(r,p) = (5.2, 4)$, $(5.5, 4)$ and  $(3.2, 2)$, respectively.}
		\label{Fig5}
	\end{figure}
	\end{example}


The lattices $\Lambda_r$ obtained in Propositions \ref{FormatA}, \ref{FormatB}, \ref{FormatC} and \ref{FormatD} satisfy $\lim_{r\to \infty} \Delta_p^{n}(\Lambda_r) = 1$.

\section{Computational Algorithm}\label{Algorithm}

In this section we present a pseudo code of an algorithm (Algorithm \ref{algo_quasiperfect}) that lists all perfect and quasi-perfect linear codes in $\mathbb{Z}^{n}$ considering the $\ell_p$ metric, $2 \leq p < \infty$, until a certain given packing radius up to congruence. 

 In the cases that the minimum covering density is known we use Proposition \ref{bounds} to bounding our search. 

The algorithm is based on two tests where the first test is a variation of the next theorem. 

\begin{theorem}{\upshape{\cite[Thm. 6]{Diameter}}}
	Let $\mathcal{P} \subset \mathbb{Z}^n$, such that $|\mathcal{P}| = m$. There is a lattice tiling of $\mathbb{Z}^n$ by translates of $\mathcal{P}$ if and only if there is an Abelian group $G$ of order $m$ and a homomorphism $\phi: \mathbb{Z}^n \to G$ such that the restriction of $\phi$ to $\mathcal{P}$ is a bijection.
	\label{thm:Homomorphism}
\end{theorem}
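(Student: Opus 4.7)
The plan is to identify a lattice tiling $\Lambda+\mathcal{P}=\mathbb{Z}^n$ with the condition that $\mathcal{P}$ is a complete set of coset representatives for $\mathbb{Z}^n/\Lambda$, taking $G$ to be this quotient and $\phi$ to be the canonical projection. Both directions of the biconditional then reduce to standard coset-enumeration arguments combined with the first isomorphism theorem.

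For the forward direction, I would assume that the translates $\lambda+\mathcal{P}$, $\lambda\in\Lambda$, are pairwise disjoint and cover $\mathbb{Z}^n$. Define $G=\mathbb{Z}^n/\Lambda$ and let $\phi$ be the quotient homomorphism. Disjointness forces $\phi(p_1)\neq\phi(p_2)$ for distinct $p_1,p_2\in\mathcal{P}$ (otherwise $p_1-p_2\in\Lambda$ would make $\mathcal{P}$ and $(p_1-p_2)+\mathcal{P}$ overlap), so $\phi|_{\mathcal{P}}$ is injective. The covering condition forces every coset to contain some element of $\mathcal{P}$, so $\phi|_{\mathcal{P}}$ is surjective onto $G$, hence bijective, and $|G|=m$.

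For the converse, given such $G$ and $\phi$, I would set $\Lambda=\ker(\phi)$. Since $\phi|_{\mathcal{P}}$ is an injection into a group of order $m=|\mathcal{P}|$, its image must be all of $G$, so $\phi$ itself is surjective and $[\mathbb{Z}^n:\Lambda]=m$; in particular $\Lambda$ is a full-rank sublattice. For any $x\in\mathbb{Z}^n$ there is then exactly one $p\in\mathcal{P}$ with $\phi(x)=\phi(p)$, yielding a unique decomposition $x=\lambda+p$ with $\lambda\in\Lambda$. This is simultaneously the covering statement $\Lambda+\mathcal{P}=\mathbb{Z}^n$ and the disjointness of the translates.

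The main subtlety is just a bookkeeping check: confirming that $\Lambda$ has finite index (hence is genuinely full-rank and defines a proper tiling) and that the bijection onto $G$ follows from injectivity combined with the cardinality equality $|G|=|\mathcal{P}|$. These both fall out immediately once the quotient framework is set up, so no deeper obstruction arises; this is precisely why the theorem admits a clean group-theoretic reformulation amenable to algorithmic search, as will be exploited in the sequel.
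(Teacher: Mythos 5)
Your proof is correct and complete: identifying $G$ with $\mathbb{Z}^n/\Lambda$ and $\phi$ with the quotient map, using disjointness for injectivity and covering for surjectivity in one direction, and taking $\Lambda=\ker\phi$ (full rank because $[\mathbb{Z}^n:\Lambda]=m<\infty$) in the other, is the standard argument. The paper itself offers no proof --- it quotes the result from \cite{Diameter} --- but your reasoning is precisely the mechanism the authors exploit immediately afterwards in their ``Injectivity Test,'' so there is nothing to reconcile.
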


The first test, called here {\it Injetivity Test}, ensures that the balls of a certain radius $r_p$ centered at points of a lattice $\Lambda$ are disjoint. More precisely, the lattice is viewed as the kernel of an application $\phi: \mathbb{Z}^n \longrightarrow \mathcal{G}$ (where $\mathcal{G}$ is an abelian group with $\#G = M = \det \Lambda$). Suppose that the lattice $\Lambda$ is generated by a matrix $A$ and consider $B$ the adjoint matrix of $A$. 
Up to group isomorphisms, we may assume that $\phi$ is the composition of two applications $\tilde{\phi}$ and $\pi$, where $\tilde\phi(x)= x B $ and $\pi(x)= \bar{x}\,\,(mod\, M)$ 
as the following diagram: $$\xymatrix{\mathbb{Z}^n \ar[r]^{\tilde\phi} \ar[rd]^{\phi} & \mathbb{Z}^n \ar[d]^{\pi }\\ & \dfrac{\mathbb{Z}^n}{M \mathbb{Z}^n}}$$ 
 If two elements of $B_{p}^{n}(r_p)$ have the same image $\phi(x)=\phi(y)$ with $x,y \in B_{p}^{n}(r_p)$, this means that the difference between them is an element of a lattice $\Lambda = Ker(\phi)$. Therefore $\phi(x)=\phi(y)$ implies $x-y = u A$, for some $u \in \mathbb{Z}^{n}$ and then $(x-y)B=u A B = u M \equiv 0\,\, (mod \, M)$, where $M = \det \Lambda$.  Summarizing, the test verify if different points of the ball $B_p^{n}(r_p)$ times $B$ are not equivalent modulus $M$.

If we have a lattice packing we must verify if this lattice is either perfect or quasi-perfect.  The second test, called here {\it Covering Test}, consists of verifying if a Voronoi cell is  a subset of the covering ball $B_p^{n}(R_p)$.

\begin{algorithm}[H]
\SetKwData{Left}{left}\SetKwData{This}{this}\SetKwData{Up}{up}
\SetKwFunction{Union}{Union}\SetKwFunction{FindCompress}{FindCompress}
\SetKwInOut{Input}{inputs}\SetKwInOut{Output}{outputs}
\Input{$M =$ volume of the lattice;  $n=$ dimension; $p \in \mathbb{N}$, $\ell_p$ metric}
\Output{List of perfect and quasi-perfect lattices of volume $M$}
\BlankLine
\Begin{
initialization\;
 $r_p \leftarrow max_{r\in \mathcal{D}_{p,n}}\{r;\,\,\# B_p^n(r)\leq M\}$\;
 $R_p \leftarrow min_{r\in \mathcal{D}_{p,n}}\{r;\,\,\# B_p^n(r)> M\}$\;
 $B_p \leftarrow B_p^n(r_p)$\;
 $B_c \leftarrow B_p^n(R_p)$\;
 $Lattices \leftarrow \{\Lambda;\,\, \Lambda \text{ sublattices of } \mathbb{Z}^n \text{ with } \mathrm{vol}(\Lambda)=M \}$\;
 $DenseLattices \leftarrow \{\}$\;
 $QuasiPerfect \leftarrow \{\}$\;
 $C \leftarrow 1$\;
 \While{$C\leq \# Lattices$}{
	\If{``Injetivity Test" in $C$-th element of $Lattices$ is positive}{
  add $C$-th element of $Lattices$ in $DenseLattices$\;}
    $C \leftarrow C+1$\;
	}
	$C \leftarrow 1$\;     
 \While{$C \leq \# DenseLattices$}{
	\If{``Covering Test" in $C$-th element of $DensitLattices$ is positive}{
  add in $C$-th element of $DenseLattices$ in $QuasiPerfect$}
    $C \leftarrow C+1$\;
	}
}
\caption{Perfect and Quasi-perfect test}
\label{algo_quasiperfect}
\end{algorithm}

\vspace{1cm}

Let ${\mathcal{A}}$ the set of generator matrices in dimension $2$ given by

\begin{tiny}
\[\begin{split} {\mathcal{A}} =  &  \left\{\left(\begin{array}{cc} 1 & 2 \\ 0 & 6 \end{array}\right)\right.,  \left(\begin{array}{cc} 1 & 2 \\ 0 & 7 \end{array}\right), \left(\begin{array}{cc} 1 & 3 \\ 0 & 7 \end{array}\right), \left(\begin{array}{cc} 1 & 3 \\ 0 & 8 \end{array}\right), \left(\begin{array}{cc} 1 & 3 \\ 0 & 11 \end{array}\right), \left(\begin{array}{cc} 1 & 4 \\ 0 & 11 \end{array}\right), \left(\begin{array}{cc} 2 & 3 \\ 0 & 6 \end{array}\right), \\& \left(\begin{array}{cc} 1 & 4 \\ 0 & 14 \end{array}\right),  \left(\begin{array}{cc} 1 & 4 \\ 0 & 15 \end{array}\right) \left(\begin{array}{cc} 1 & 6 \\ 0 & 15 \end{array}\right), \left(\begin{array}{cc} 1 & 6 \\ 0 & 16 \end{array}\right), \left(\begin{array}{cc} 1 & 4 \\ 0 & 17 \end{array}\right), \left(\begin{array}{cc} 1 & 5 \\ 0 & 17 \end{array}\right), \left(\begin{array}{cc} 1 & 7 \\ 0 & 17 \end{array}\right), \\& \left(\begin{array}{cc} 1 & 4 \\ 0 & 18 \end{array}\right), \left(\begin{array}{cc} 1 & 5 \\ 0 & 18 \end{array}\right),
	\left(\begin{array}{cc} 1 & 7 \\ 0 & 18 \end{array}\right),
	\left(\begin{array}{cc} 1 & 4 \\ 0 & 19 \end{array}\right), 
	\left(\begin{array}{cc} 1 & 5 \\ 0 & 19 \end{array}\right),
	\left(\begin{array}{cc} 1 & 8 \\ 0 & 20 \end{array}\right),
	\left(\begin{array}{cc} 1 & 5 \\ 0 & 23 \end{array}\right),\\& 
	\left(\begin{array}{cc} 1 & 9 \\ 0 & 23 \end{array}\right), 
	\left(\begin{array}{cc} 1 & 5 \\ 0 & 24 \end{array}\right),
	\left(\begin{array}{cc} 1 & 6 \\ 0 & 33 \end{array}\right), 
	\left(\begin{array}{cc} 1 & 6 \\ 0 & 34 \end{array}\right), 
	\left(\begin{array}{cc} 1 & 10 \\ 0 & 35 \end{array}\right), 
	\left(\begin{array}{cc} 1 & 7 \\ 0 & 39 \end{array}\right), 
	\left(\begin{array}{cc} 1 & 11 \\ 0 & 39 \end{array}\right), \\& \left.\left(\begin{array}{cc} 1 & 12 \\ 0 & 42 \end{array}\right)\right\} \end{split}\]
\end{tiny}	

\begin{proposition}\label{22} The quasi-perfect linear codes in $\mathbb{Z}^{2}$ in the $\ell_2$ metric, up to equivalence, have packing radius in the set $\{1,2,3,4,\sqrt{2},\sqrt{5},2 \sqrt{5},\sqrt{10}\}$ and are given by the  generator matrices listed in ${\mathcal{A}}$ and by the generator matrices \[\tiny{\begin{split}  &  \left\{\left(\begin{array}{cc} 1 & 8 \\ 0 & 53 \end{array}\right)\right., 
    \left(\begin{array}{cc} 1 & 20 \\ 0 & 53 \end{array}\right), 
    \left(\begin{array}{cc} 1 & 9 \\ 0 & 77 \end{array}\right), 
    \left.\left(\begin{array}{cc} 1 & 17 \\ 0 & 77 \end{array}\right)\right\}. \end{split}}\] \end{proposition}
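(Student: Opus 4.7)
The proof will be essentially a verification via the computational procedure of Algorithm \ref{algo_quasiperfect}, and the plan is to explain why the search can be made finite and exhaustive. The key theoretical ingredient is Proposition \ref{bounds}: since any quasi-perfect lattice $\Lambda \subset \mathbb{Z}^2$ satisfies
\[
\overline{\Theta_2^2} \;\le\; \frac{\mathcal{V}_2^2\bigl(r_2 + \tfrac{1}{2}\sqrt{2}\bigr)^2}{|B_2^2(r_2)|},
\]
and the optimal covering density $\overline{\Theta_2^2}$ in dimension $2$ is attained by the hexagonal lattice (so it is known exactly), one obtains an explicit upper bound on $r_2^2$. As already indicated in Table 1 of the paper, this bound forces $r_2^2 \le 74$, so only finitely many candidate packing radii $r_2 \in \mathcal{D}_{2,2}$ need be tested.

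First, for each such $r_2$, compute $R_2$ as the next element of $\mathcal{D}_{2,2}$ after $r_2$ and the associated ball sizes $\mu(2,2,r_2)$, $\mu(2,2,R_2)$. The volume (determinant) $M$ of a candidate quasi-perfect lattice must satisfy $\mu(2,2,r_2) \le M \le \mu(2,2,R_2)$: the lower bound because the balls of radius $r_2$ around lattice points must be disjoint in $\mathbb{Z}^2$, and the upper bound because the translates of $B_2^2(R_2)$ must cover $\mathbb{Z}^2$. This yields, for each $r_2$, only finitely many candidate determinants $M$.

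Next, for each pair $(r_2, M)$, enumerate representatives of all sublattices of $\mathbb{Z}^2$ with determinant $M$. A finite complete list is obtained by taking Hermite normal forms $\begin{pmatrix} a & b \\ 0 & d \end{pmatrix}$ with $ad = M$ and $0 \le b < d$, then reducing modulo the equivalence relation of Section \ref{CodesandLattices} (coordinate permutation composed with sign changes). Apply the Injectivity Test of Algorithm \ref{algo_quasiperfect} — a direct application of Theorem \ref{thm:Homomorphism} — to keep only those lattices for which the translates of $B_2^2(r_2)$ are disjoint, i.e.\ the restriction of the quotient map $\mathbb{Z}^2 \to \mathbb{Z}^2/\Lambda$ to $B_2^2(r_2)$ is injective. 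Then apply the Covering Test to each surviving lattice, checking that every point of a fundamental domain lies within $\ell_2$-distance $R_2$ of some lattice point (equivalently, that the ball $B_2^2(R_2)$ covers the Voronoi region). The lattices that pass both tests are exactly the quasi-perfect codes, and collecting them yields the matrices in $\mathcal{A}$ together with the four additional matrices of determinant $53$ and $77$ listed in the statement; reading off their packing radii gives the set $\{1,\sqrt{2},2,\sqrt{5},3,\sqrt{10},4,2\sqrt{5}\}$.

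The main obstacle is twofold: first, ensuring that the bound from Proposition \ref{bounds} is actually used with the sharp value of $\overline{\Theta_2^2}$ so that the search terminates at $r_2^2 = 74$ without missing anything (in particular, the jump from $r_2^2 = 74$ to the unexpectedly large cases $M=53$ and $M=77$ must fall inside the permissible range); and second, the bookkeeping of equivalence classes — one must verify that no quasi-perfect lattice is discarded because a congruent representative was already processed, and conversely that distinct matrices in the final list are pairwise inequivalent under coordinate permutation and sign changes. Everything else reduces to the finite combinatorial computation carried out by the algorithm.
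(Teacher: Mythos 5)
Your proposal is correct and follows essentially the same route as the paper: the paper's proof likewise invokes Inequality (\ref{7}) with the known optimal planar covering density $\overline{\Theta_2^2}\approx 1.2092$ to cap the search (phrased there as determinant $\leq 241$, equivalent to your $r_2^2\leq 74$ since $\mu(2,2,\sqrt{74})=241$), and then runs Algorithm \ref{algo_quasiperfect} exhaustively over the remaining finitely many lattices. Your additional remarks on Hermite normal forms and congruence bookkeeping only spell out details of the same computation.
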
 
\begin{proof} Since the minimum covering radius in dimension $2$ in the $\ell_2$ metric is $1.2092$, using Inequality (\ref{7}) the maximum volume possible for a quasi-perfect linear lattice must be smaller or equal to $241$. Then, we use Algorithm \ref{algo_quasiperfect}  to list all quasi-perfect codes in dimension $2$ with volume smaller or equal to $241$.  \end{proof}

For Propositions \ref{23} and \ref{24} we use Algorithm \ref{algo_quasiperfect} for listing all quasi-perfect codes in the $\ell_p$ metric, for $p=3$ and $4$, respectively,  with volume smaller or equal to $600$.

\begin{proposition}\label{23} The linear quasi-perfect codes in $\mathbb{Z}^{2}$ in the $\ell_3$ metric with volume smaller or equal to 600, up to congruence, have packing radius in the set  $\{1,2,3,\sqrt[3]{2},3^{2/3},2^{2/3} \sqrt[3]{7},\sqrt[3]{35}\}$ and are given by generator matrices listed in ${\mathcal{A}}$ and by the generator matrices
	\[\tiny{\begin{split}  & \left\{\left(\begin{array}{cc} 1 & 7 \\ 0 & 47 \end{array}\right)\right., \left(\begin{array}{cc} 1 & 20 \\ 0 & 47 \end{array}\right), \left.\left(\begin{array}{cc} 1 & 7 \\ 0 & 48 \end{array}\right)\right\}. \end{split}}\] \end{proposition}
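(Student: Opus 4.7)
The proof is computational: I would invoke Algorithm \ref{algo_quasiperfect} on every candidate sublattice of $\mathbb{Z}^2$ of volume $1 \leq M \leq 600$ and filter the output by the congruence relation of Section \ref{CodesandLattices}. No volume bound from covering densities is needed (as was used for Proposition \ref{22}), since the statement is explicitly restricted to $M \leq 600$.

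The enumeration step is straightforward: every sublattice $\Lambda \subset \mathbb{Z}^2$ of determinant $M$ has a unique Hermite normal form $\left(\begin{smallmatrix} a & b \\ 0 & d \end{smallmatrix}\right)$ with $ad = M$ and $0 \leq b < d$, yielding a finite explicit list. For each such $\Lambda$ I would set the tentative packing radius $r_3 = \max\{r \in \mathcal{D}_{3,2} : |B_3^2(r)| \leq M\}$ and the tentative covering radius $R_3 = \min\{r \in \mathcal{D}_{3,2} : r > r_3\}$ (the next element of the distance set, corresponding to degree of imperfection $t = 1$). The Injetivity Test then applies Theorem \ref{thm:Homomorphism}: it checks whether $\phi : x \mapsto xB \bmod M$, with $B$ the adjoint of the generator matrix, is injective on $B_3^2(r_3)$. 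The Covering Test then verifies that every residue class in $\mathbb{Z}^2/\Lambda$ has a representative within $\ell_3$-distance $R_3$ of the origin, which (by an exhaustive sweep over a fundamental domain) is equivalent to checking that each coset of $\Lambda$ meets $B_3^2(R_3)$.

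A decisive simplification is that $\mathcal{D}_{3,2} \subset \{k^{1/3} : k \in \mathbb{Z}_{\geq 0}\}$, so every distance inequality $\|x\|_3 \leq r$ can be recast as the integer inequality $|x_1|^3 + |x_2|^3 \leq r^3 \in \mathbb{Z}$. Hence both the Injetivity Test and the Covering Test reduce to exact integer arithmetic, avoiding spurious results from floating-point comparisons involving irrational radii such as $\sqrt[3]{2}$, $3^{2/3}$, $2^{2/3}\sqrt[3]{7}$ and $\sqrt[3]{35}$.

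The principal obstacle is organizational rather than theoretical: one must (i) enumerate the $O(\sum_{M \leq 600} \sigma(M))$ Hermite forms efficiently so that the full sweep is tractable, (ii) maintain exact integer arithmetic throughout the two tests so that the boundary of each ball is decided correctly, and (iii) reduce the resulting list of quasi-perfect lattices modulo coordinate permutation and sign change, so that the reported list is irredundant. After these reductions, the surviving generator matrices are precisely those already enumerated in $\mathcal{A}$ together with the three additional matrices displayed in the statement, and reading off their packing radii yields the claimed set $\{1,2,3,\sqrt[3]{2},3^{2/3},2^{2/3}\sqrt[3]{7},\sqrt[3]{35}\}$.
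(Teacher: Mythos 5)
Your proposal is correct and matches the paper's approach: the paper proves Proposition \ref{23} by exactly this computational sweep, running Algorithm \ref{algo_quasiperfect} over all sublattices of $\mathbb{Z}^2$ with volume at most $600$ (the bound being imposed by the statement itself rather than derived from Proposition \ref{bounds}) and reducing the output up to congruence. Your additional remarks on Hermite normal form enumeration and exact integer arithmetic via $\mathcal{D}_{3,2}\subset\{k^{1/3}:k\in\mathbb{Z}_{\geq 0}\}$ are sensible implementation details consistent with the paper's Injectivity and Covering Tests.
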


\begin{proposition}\label{24} The linear quasi-perfect codes in $\mathbb{Z}^{2}$ in the $\ell_4$ metric with volume smaller or equal to $600$, up to congruence, have packing radius in the set $\{1,2,3,\sqrt[4]{2},\sqrt[4]{17},\sqrt[4]{82},\sqrt[4]{97},\sqrt[4]{337}\}$ and are given by  the  generator matrices listed in ${\mathcal{A}}$ and by the generator matrices
	\[\tiny{\begin{split}  &   \left\{\left(\begin{array}{cc} 1 & 7 \\ 0 & 47 \end{array}\right)\right., \left(\begin{array}{cc} 1 & 20 \\ 0 & 47 \end{array}\right),
		\left(\begin{array}{cc} 1 & 7 \\ 0 & 48 \end{array}\right),
		\left(\begin{array}{cc} 1 & 9 \\ 0 & 79 \end{array}\right),
		\left(\begin{array}{cc} 1 & 35 \\ 0 & 79 \end{array}\right), \left.\left(\begin{array}{cc} 1 & 9 \\ 0 & 80 \end{array}\right)\right\}. \end{split}}\] \end{proposition}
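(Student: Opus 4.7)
The statement is a computational claim verified by direct application of Algorithm \ref{algo_quasiperfect} with parameters $n=2$, $p=4$, and the volume bound $M \leq 600$. Unlike Proposition \ref{22}, where Inequality (\ref{7}) supplied a volume bound for free because $\overline{\Theta_2^2}$ is known, here no closed-form minimum covering density is available for the $\ell_4$ metric, so the bound $600$ is imposed as a hypothesis; the plan reduces to running the enumeration up to this volume and comparing the surviving lattices against the stated list.

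The first step is to enumerate all sublattices $\Lambda \subset \mathbb{Z}^2$ of determinant at most $600$. By the Hermite Normal Form, each such lattice has a unique generator matrix of the form $\left(\begin{array}{cc} a & b \\ 0 & d \end{array}\right)$ with $ad \leq 600$ and $0 \leq b < d$. One quotients by the $\ell_4$-congruence of Section \ref{CodesandLattices} (permutations of coordinates and sign changes) to keep a single representative per class, for instance by restricting to $0 \leq b \leq \lfloor d/2 \rfloor$ and discarding HNF pairs related by the coordinate swap. For each surviving representative, define $r_4 := \max\{r \in \mathcal{D}_{4,2} : \mu(2,4,r) \leq \det \Lambda\}$ and $R_4 := \min\{r \in \mathcal{D}_{4,2} : \mu(2,4,r) > \det \Lambda\}$; these are the only radii compatible with a packing and a covering at the given volume.

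Next, apply the Injectivity Test: compute the adjoint $B$ of the generator matrix and verify that the $\mu(2,4,r_4)$ integer points of $B_4^2(r_4)$ have pairwise distinct images under $x \mapsto xB \pmod{M}$, with $M=\det \Lambda$. By Theorem \ref{thm:Homomorphism} this is equivalent to the balls of radius $r_4$ centred at $\Lambda$ being pairwise disjoint. Survivors then pass through the Covering Test, which checks that every integer vector in a fundamental domain of $\Lambda$ lies at $\ell_4$-distance at most $R_4$ from $\Lambda$. A lattice passing both tests is automatically quasi-perfect with imperfection degree $1$, because $r_4$ and $R_4$ are consecutive in $\mathcal{D}_{4,2}$ by construction.

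The main obstacle is implementation fidelity rather than mathematical depth. Distances in $\mathcal{D}_{4,2}$ take the form $(k^4+\ell^4)^{1/4}$ and can cluster extremely tightly, so every comparison must be carried out on the integer fourth powers $k^4+\ell^4$ to avoid misorderings from floating-point error. One must also be careful in the congruence-class reduction so that no class is either missed or counted twice in the HNF sweep. Once the enumeration terminates, the packing-radius set $\{1,2,3,\sqrt[4]{2},\sqrt[4]{17},\sqrt[4]{82},\sqrt[4]{97},\sqrt[4]{337}\}$ and the full list of generator matrices—those inherited from $\mathcal{A}$ together with the six additional matrices stated in the proposition—are read off directly from the output.
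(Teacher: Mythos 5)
Your proposal matches the paper's own justification, which is simply to run Algorithm \ref{algo_quasiperfect} over all sublattices of $\mathbb{Z}^2$ with volume at most $600$ in the $\ell_4$ metric and read off the survivors; your added remarks on HNF enumeration, congruence reduction, and exact integer comparison of fourth powers are faithful implementation details of that same computation. No further comparison is needed.
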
 

Let ${\mathcal{B}}$ the set of generator matrices in dimension $3$  given by
\[\tiny{\begin{split} {\mathcal{B}} = &  \left\{\left(\begin{array}{ccc} 1 & 0 & 2 \\ 0 & 1 & 3 \\ 0 & 0 & 8 \end{array}\right)\right.,
		\left(\begin{array}{ccc} 1 & 0 & 2  \\ 0 & 1 & 3 \\ 0 & 0 & 9 \end{array}\right),
		\left(\begin{array}{ccc} 1 & 0 & 2  \\ 0 & 1 & 4 \\ 0 & 0 & 9 \end{array}\right),
		\left(\begin{array}{ccc} 1 & 0 & 3  \\ 0 & 1 & 4 \\ 0 & 0 & 9 \end{array}\right),
		\left(\begin{array}{ccc} 1 & 1 & 1  \\ 0 & 3 & 0 \\ 0 & 0 & 3 \end{array}\right), \\&
		\left(\begin{array}{ccc} 1 & 0 & 2  \\ 0 & 1 & 3 \\ 0 & 0 & 10 \end{array}\right), 
		\left(\begin{array}{ccc} 1 & 0 & 2  \\ 0 & 1 & 4 \\ 0 & 0 & 10 \end{array}\right),
		\left(\begin{array}{ccc} 1 & 0 & 3  \\ 0 & 1 & 4 \\ 0 & 0 & 10 \end{array}\right),
		\left(\begin{array}{ccc} 1 & 0 & 2  \\ 0 & 1 & 3 \\ 0 & 0 & 11 \end{array}\right),
		\left(\begin{array}{ccc} 1 & 0 & 2  \\ 0 & 1 & 4 \\ 0 & 0 & 11 \end{array}\right), \\&
		\left(\begin{array}{ccc} 1 & 0 & 3  \\ 0 & 1 & 4 \\ 0 & 0 & 11 \end{array}\right), 
		\left(\begin{array}{ccc} 1 & 0 & 2  \\ 0 & 1 & 5 \\ 0 & 0 & 11 \end{array}\right),  
		\left(\begin{array}{ccc} 1 & 0 & 3  \\ 0 & 1 & 5 \\ 0 & 0 & 11 \end{array}\right),
		\left(\begin{array}{ccc} 1 & 0 & 4  \\ 0 & 1 & 5 \\ 0 & 0 & 11 \end{array}\right),
		\left(\begin{array}{ccc} 1 & 0 & 2  \\ 0 & 1 & 4 \\ 0 & 0 & 12 \end{array}\right), \\&
		\left(\begin{array}{ccc} 1 & 0 & 2  \\ 0 & 1 & 5 \\ 0 & 0 & 12 \end{array}\right),
		\left(\begin{array}{ccc} 1 & 0 & 3  \\ 0 & 1 & 5 \\ 0 & 0 & 12 \end{array}\right),
		\left(\begin{array}{ccc} 1 & 0 & 2  \\ 2 & 2 & 0 \\ 0 & 0 & 6 \end{array}\right),
		\left(\begin{array}{ccc} 1 & 0 & 2  \\ 0 & 2 & 3 \\ 0 & 0 & 6 \end{array}\right),
		\left(\begin{array}{ccc} 1 & 1 & 2  \\ 0 & 3 & 0 \\ 0 & 0 & 4 \end{array}\right), \\&  \left(\begin{array}{ccc} 1 & 0 & 2 \\ 0 & 1 & 4 \\ 0 & 0 & 13 \end{array}\right),
		\left(\begin{array}{ccc} 1 & 0 & 3  \\ 0 & 1 & 4 \\ 0 & 0 & 13 \end{array}\right),
		\left(\begin{array}{ccc} 1 & 0 & 2  \\ 0 & 1 & 5 \\ 0 & 0 & 13 \end{array}\right),
		\left(\begin{array}{ccc} 1 & 0 & 3  \\ 0 & 1 & 5 \\ 0 & 0 & 13 \end{array}\right),
		\left(\begin{array}{ccc} 1 & 0 & 2  \\ 0 & 1 & 6 \\ 0 & 0 & 13 \end{array}\right), \\& 
		\left(\begin{array}{ccc} 1 & 0 & 3  \\ 0 & 1 & 6 \\ 0 & 0 & 13 \end{array}\right),
		\left(\begin{array}{ccc} 1 & 0 & 4  \\ 0 & 1 & 6 \\ 0 & 0 & 13 \end{array}\right),  
		\left(\begin{array}{ccc} 1 & 0 & 2  \\ 0 & 1 & 5 \\ 0 & 0 & 14 \end{array}\right),
		\left(\begin{array}{ccc} 1 & 0 & 2  \\ 0 & 1 & 6 \\ 0 & 0 & 14 \end{array}\right),
		\left(\begin{array}{ccc} 1 & 0 & 3  \\ 0 & 1 & 6 \\ 0 & 0 & 14 \end{array}\right),\\&
		\left(\begin{array}{ccc} 1 & 0 & 4  \\ 0 & 1 & 6 \\ 0 & 0 & 14 \end{array}\right),
		\left(\begin{array}{ccc} 1 & 0 & 2  \\ 0 & 1 & 5 \\ 0 & 0 & 15 \end{array}\right),
		\left(\begin{array}{ccc} 1 & 0 & 3  \\ 0 & 1 & 5 \\ 0 & 0 & 15 \end{array}\right),
		\left(\begin{array}{ccc} 1 & 0 & 2  \\ 0 & 1 & 6 \\ 0 & 0 & 15 \end{array}\right),
		\left(\begin{array}{ccc} 1 & 0 & 3  \\ 0 & 1 & 6 \\ 0 & 0 & 15 \end{array}\right), \\&
		\left(\begin{array}{ccc} 1 & 0 & 4  \\ 0 & 1 & 6 \\ 0 & 0 & 15 \end{array}\right),
		\left(\begin{array}{ccc} 1 & 0 & 3  \\ 0 & 1 & 7 \\ 0 & 0 & 15 \end{array}\right),
		\left(\begin{array}{ccc} 1 & 0 & 5  \\ 0 & 1 & 7 \\ 0 & 0 & 15 \end{array}\right),
		\left(\begin{array}{ccc} 1 & 0 & 2  \\ 0 & 3 & 0 \\ 0 & 0 & 5 \end{array}\right),
		\left(\begin{array}{ccc} 1 & 1 & 2  \\ 0 & 3 & 0 \\ 0 & 0 & 15 \end{array}\right), \\&
		\left(\begin{array}{ccc} 1 & 0 & 2  \\ 0 & 1 & 6 \\ 0 & 0 & 16 \end{array}\right),
		\left(\begin{array}{ccc} 1 & 0 & 2  \\ 0 & 1 & 6 \\ 0 & 0 & 17 \end{array}\right),
		\left(\begin{array}{ccc} 1 & 0 & 3  \\ 0 & 1 & 6 \\ 0 & 0 & 17 \end{array}\right),
		\left(\begin{array}{ccc} 1 & 0 & 3  \\ 0 & 1 & 8 \\ 0 & 0 & 17 \end{array}\right),
		\left(\begin{array}{ccc} 1 & 0 & 3  \\ 0 & 1 & 8 \\ 0 & 0 & 21 \end{array}\right), \\&
		\left(\begin{array}{ccc} 1 & 0 & 3  \\ 0 & 1 & 8 \\ 0 & 0 & 23 \end{array}\right),
		\left(\begin{array}{ccc} 1 & 0 & 5  \\ 0 & 1 & 8 \\ 0 & 0 & 23 \end{array}\right),
		\left(\begin{array}{ccc} 1 & 0 & 3  \\ 0 & 1 & 9 \\ 0 & 0 & 23 \end{array}\right),
		\left(\begin{array}{ccc} 1 & 0 & 3  \\ 0 & 1 & 8 \\ 0 & 0 & 24 \end{array}\right),
		\left(\begin{array}{ccc} 1 & 0 & 5  \\ 0 & 1 & 8 \\ 0 & 0 & 24 \end{array}\right),  \end{split}}\] 
	\[\tiny{\begin{split}  & 	\left(\begin{array}{ccc} 1 & 1 & 3  \\ 0 & 3 & 0 \\ 0 & 0 & 8 \end{array}\right),
		\left(\begin{array}{ccc} 1 & 0 & 5  \\ 0 & 1 & 8 \\ 0 & 0 & 25 \end{array}\right),
		\left(\begin{array}{ccc} 1 & 0 & 5  \\ 0 & 1 & 9 \\ 0 & 0 & 25 \end{array}\right),
		\left(\begin{array}{ccc} 1 & 0 & 8  \\ 0 & 1 & 11 \\ 0 & 0 & 25 \end{array}\right),
		\left(\begin{array}{ccc} 1 & 0 & 5  \\ 0 & 1 & 9 \\ 0 & 0 & 26 \end{array}\right), \\& \left(\begin{array}{ccc} 1 & 0 & 11  \\ 0 & 1 & 16 \\ 0 & 0 & 35 \end{array}\right),
		\left(\begin{array}{ccc} 1 & 0 & 12  \\ 0 & 1 & 18 \\ 0 & 0 & 39 \end{array}\right),
		\left(\begin{array}{ccc} 1 & 0 & 5  \\ 0 & 1 & 13 \\ 0 & 0 & 41 \end{array}\right),
		\left(\begin{array}{ccc} 1 & 0 & 8  \\ 0 & 1 & 19 \\ 0 & 0 & 41 \end{array}\right),
		\left(\begin{array}{ccc} 1 & 0 & 13  \\ 0 & 1 & 19 \\ 0 & 0 & 41 \end{array}\right), \\&
		\left(\begin{array}{ccc} 1 & 0 & 9  \\ 0 & 2 & 6 \\ 0 & 0 & 21 \end{array}\right),
		\left(\begin{array}{ccc} 1 & 0 & 4  \\ 0 & 3 & 7 \\ 0 & 0 & 14 \end{array}\right),
		\left(\begin{array}{ccc} 1 & 3 & 2  \\ 0 & 6 & 0 \\ 0 & 0 & 7 \end{array}\right),
		\left(\begin{array}{ccc} 1 & 0 & 14  \\ 0 & 1 & 20 \\ 0 & 0 & 44 \end{array}\right),
		\left(\begin{array}{ccc} 1 & 0 & 7  \\ 0 & 1 & 11 \\ 0 & 0 & 45 \end{array}\right), \\&
		\left(\begin{array}{ccc} 1 & 0 & 8  \\ 0 & 1 & 13 \\ 0 & 0 & 45 \end{array}\right),
		\left(\begin{array}{ccc} 1 & 0 & 4  \\ 0 & 1 & 17 \\ 0 & 0 & 45 \end{array}\right),
		\left(\begin{array}{ccc} 1 & 0 & 8  \\ 0 & 1 & 12 \\ 0 & 0 & 46 \end{array}\right),
		\left(\begin{array}{ccc} 1 & 0 & 5  \\ 0 & 1 & 13 \\ 0 & 0 & 47 \end{array}\right),
		\left(\begin{array}{ccc} 1 & 0 & 4  \\ 0 & 1 & 18 \\ 0 & 0 & 47 \end{array}\right), \\&
		\left(\begin{array}{ccc} 1 & 0 & 12  \\ 0 & 1 & 19 \\ 0 & 0 & 47 \end{array}\right),
		\left(\begin{array}{ccc} 1 & 0 & 8  \\ 0 & 1 & 12 \\ 0 & 0 & 50 \end{array}\right),
		\left(\begin{array}{ccc} 1 & 0 & 5  \\ 0 & 1 & 21 \\ 0 & 0 & 55 \end{array}\right),
		\left(\begin{array}{ccc} 1 & 0 & 5  \\ 0 & 1 & 25 \\ 0 & 0 & 63 \end{array}\right),
		\left(\begin{array}{ccc} 1 & 0 & 5  \\ 0 & 1 & 26 \\ 0 & 0 & 65 \end{array}\right), \\&
		\left.\left(\begin{array}{ccc} 1 & 1 & 5  \\ 0 & 5 & 0 \\ 0 & 0 & 13 \end{array}\right)\right\}. \end{split}}\]

\begin{proposition}\label{32}  The quasi-perfect linear codes in $\mathbb{Z}^{3}$ in the $\ell_2$ metric, up to congruence, have packing radius in the set $\{1,2,\sqrt{2},2 \sqrt{2},\sqrt{5}\}$ and are given by the  generator matrices listed in ${\mathcal{B}}$ and by the generator matrices
\[\tiny{\begin{split}  &  \left\{\left(\begin{array}{ccc} 1 & 0 & 5  \\ 0 & 1 & 41 \\ 0 & 0 & 105 \end{array}\right)\right., 
\left.\left(\begin{array}{ccc} 1 & 1  & 5 \\ 0 & 6 & 6 \\ 0 & 0  & 18 \end{array}\right)\right\}. \end{split}}\]
\end{proposition}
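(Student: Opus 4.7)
The plan is to follow the same computational strategy employed in Propositions \ref{22}, \ref{23}, and \ref{24}, specialized to $n=3$ and $p=2$. First I would invoke Inequality (\ref{7}) of Proposition \ref{bounds} together with the known optimal lattice covering density of Euclidean $\mathbb{R}^3$, realized by the body-centered cubic lattice $A_3^*$ with $\overline{\Theta_2^3} = \tfrac{5\sqrt{5}\,\pi}{24}$. Since any quasi-perfect lattice $\Lambda\subseteq\mathbb{Z}^3$ satisfies $\overline{R_2}(\Lambda) \leq R_2 + \tfrac{1}{2}\sqrt{3}$ and $\det\Lambda \geq |B_2^3(r_2)|$, Inequality (\ref{7}) gives an explicit finite cap $M_0$ on $\det\Lambda$ beyond which no quasi-perfect lattice can exist. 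This reduces the problem to a finite search.

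Next I would apply Algorithm \ref{algo_quasiperfect} with $n=3$, $p=2$, and determinant bounded by $M_0$. Concretely, for each $M \leq M_0$ the algorithm enumerates the full-rank sublattices of $\mathbb{Z}^3$ of determinant $M$ via their Hermite normal forms, assigns $r_2 = \max\{r\in\mathcal{D}_{2,3} : |B_2^3(r)|\leq M\}$, and runs the Injectivity Test (checking that the points of $B_2^3(r_2)$ have pairwise distinct images under the quotient map $\mathbb{Z}^3 \to \mathbb{Z}^3/\Lambda$ via the adjoint construction of Section \ref{Algorithm}). Every lattice passing this test is a packing; we then run the Covering Test, verifying that the Voronoi cell of $\Lambda$ is contained in $B_2^3(R_2)$ where $R_2 = \min\{r\in\mathcal{D}_{2,3} : r > r_2\}$. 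Lattices passing both tests are exactly the quasi-perfect ones (together with the perfect ones, already classified in \cite{PerfectpLee}).

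Finally, I would reduce the resulting list modulo the $\ell_2$-congruence relation from Section \ref{CodesandLattices}, namely the action of signed coordinate permutations (the hyperoctahedral group of order $48$), keeping one canonical representative per orbit. The expected output is precisely the set of generator matrices in $\mathcal{B}$, augmented by the two additional matrices in the statement, with packing radii $r_2 \in \{1, \sqrt{2}, 2, \sqrt{5}, 2\sqrt{2}\}$.

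The main obstacle is purely computational: the number of Hermite normal forms in dimension three grows as $\sum_{d\mid M} d\,\sigma_1(M/d)$, so even with the bound $M_0$ derived from Inequality (\ref{7}), the enumeration becomes sizeable and requires careful management of the Injectivity and Covering Tests (in particular, efficient enumeration of the integer points of $B_2^3(R_2)$ and fast modular reduction by the adjoint matrix). The sharpness of the BCC covering density in $\mathbb{R}^3$ is what makes the bound from Inequality (\ref{7}) tight enough to render the search feasible; without it, the enumeration would be intractable.
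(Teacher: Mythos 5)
Your proposal follows exactly the paper's own argument: the authors likewise use Inequality (\ref{7}) with the optimal covering density $\overline{\Theta_2^3}\approx 1.4635$ of the BCC lattice to cap the determinant (they compute the bound $1419$) and then run Algorithm \ref{algo_quasiperfect} over all sublattices of $\mathbb{Z}^3$ up to that volume, reducing the output up to congruence. Your additional detail on Hermite normal form enumeration and the two tests simply spells out what the paper delegates to Section \ref{Algorithm}.
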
 
\begin{proof} Since the minimum covering radius in dimension $3$ in the $\ell_2$ metric is $1.4635$, using Inequality (\ref{7}) the maximum volume possible for a quasi-perfect linear lattice must be smaller or equal to $1419$. Then, we use Algorithm \ref{algo_quasiperfect}  to list all quasi-perfect codes in dimension $2$ with volume smaller or equal to $1419$.  \end{proof}

For Propositions \ref{33} and \ref{34} we use Algorithm \ref{algo_quasiperfect} for listing all quasi-perfect codes in the $\ell_p$ metric, for $p=3$ and $4$, respectively,  with volume smaller or equal to $1500$.

\begin{proposition}\label{33}  The linear quasi-perfect codes in $\mathbb{Z}^{3}$ in the $\ell_3$ metric with volume smaller or equal to $1500$, up to congruence, have packing radius in the set $\{1,2,\sqrt[3]{2},2 \sqrt[3]{2},3^{2/3},\sqrt[3]{17}\}$ and are given by the  generator matrices listed in ${\mathcal{B}}$ and by the generator matrices
	\[\tiny{\begin{split}  &  \left\{\left(\begin{array}{ccc} 1 & 1 & 5  \\ 0 & 6 & 6 \\ 0 & 0 & 18 \end{array}\right)\right., \left(\begin{array}{ccc} 1 & 0 & 5  \\ 0 & 1 & 25 \\ 0 & 0 & 123 \end{array}\right), \left(\begin{array}{ccc} 1 & 0 & 5  \\ 0 & 1 & 49 \\ 0 & 0 & 123 \end{array}\right), \left(\begin{array}{ccc} 1 & 0 & 49  \\ 0 & 1 & 59 \\ 0 & 0 & 1-23 \end{array}\right), \left.\left(\begin{array}{ccc} 1 & 0 & 5  \\ 0 & 1 & 25 \\ 0 & 0 & 124 \end{array}\right)\right\}. \end{split}}\]
	\end{proposition}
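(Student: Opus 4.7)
The plan is to invoke Algorithm \ref{algo_quasiperfect} with inputs $n=3$, $p=3$ and to run it for every admissible volume $M$ in the range $1 \leq M \leq 1500$. In contrast with Proposition \ref{32}, where the known value of the minimum covering density $\overline{\Theta_{2}^{3}}$ together with Inequality (\ref{7}) yields an a priori upper bound ($M \leq 1419$) that makes the enumeration exhaustive, the corresponding value $\overline{\Theta_{3}^{3}}$ is not known, so the cap $M \leq 1500$ is the practical ceiling of our search rather than a consequence of Proposition \ref{bounds}. This is why the statement is phrased with the hypothesis ``volume smaller or equal to $1500$''.

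First, for each $M$, I would determine $r_3 = \max\{r \in \mathcal{D}_{3,3} : \#B_3^{3}(r) \leq M\}$ and $R_3 = \min\{r \in \mathcal{D}_{3,3} : \#B_3^{3}(r) > M\}$, the only candidate values against which a lattice of determinant $M$ can be tested (by the packing and covering definitions in Section \ref{Quasi-Perfect}). Next, I would enumerate all full-rank sublattices $\Lambda \subseteq \mathbb{Z}^3$ of determinant $M$ by iterating over their Hermite normal forms, and I would discard duplicates by reducing modulo the congruence equivalence of Section \ref{CodesandLattices} (permutations of coordinates composed with sign changes) to guarantee that only one representative per class survives.

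For each surviving $\Lambda$ I would apply the two tests exactly as described in Algorithm \ref{algo_quasiperfect}. The \emph{Injectivity Test} uses Theorem 3.1 (Theorem 6 of \cite{Diameter}): letting $B$ be the adjoint of a generator matrix $A$ of $\Lambda$, one checks whether the map $x \mapsto xB \pmod{M}$ is injective on $B_3^{3}(r_3)$; a positive answer is equivalent to the balls of radius $r_3$ centered at $\Lambda$ forming a packing. The \emph{Covering Test} then verifies that every coset representative of $\mathbb{Z}^3/\Lambda$ lies at $\ell_3$-distance at most $R_3$ from $\Lambda$ (equivalently, that a fundamental Voronoi cell of $\Lambda$ is contained in $B_3^{3}(R_3)$). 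A lattice that passes both tests is quasi-perfect by definition, and conversely a quasi-perfect lattice of determinant $M$ necessarily passes both. Finally I would collect the resulting lattices, record their packing radii, and verify that the list of generator matrices coincides (up to congruence) with those exhibited in $\mathcal{B}$ together with the five extra matrices of the statement.

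The main obstacle is computational rather than mathematical: the number of Hermite normal forms of determinant $M$ grows quickly with $M$, and verifying the covering test requires inspecting one representative from each of the $M$ cosets of $\Lambda$ in $\mathbb{Z}^3$, so the cost scales at least linearly in $M$ per lattice. Consequently, care must be taken with the congruence reduction (to avoid duplicated work) and with early-termination in both tests (discard a lattice as soon as a packing collision or an uncovered coset is found). Once the implementation is carried out, the output of the algorithm is precisely the union of $\mathcal{B}$ and the five matrices displayed in the statement, with packing radii falling in the announced set $\{1,2,\sqrt[3]{2},2\sqrt[3]{2},3^{2/3},\sqrt[3]{17}\}$, which completes the proof.
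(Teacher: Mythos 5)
Your proposal is correct and follows essentially the same route as the paper, which simply states that Algorithm \ref{algo_quasiperfect} is run for $p=3$, $n=3$ over all volumes up to $1500$; your additional observations (that the cap $1500$ is a hypothesis rather than a consequence of Proposition \ref{bounds}, and the practical details of enumerating Hermite normal forms and reducing by congruence) are faithful elaborations of what the paper leaves implicit. No discrepancy to report.
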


\begin{proposition}\label{34}  The linear quasi-perfect codes in $\mathbb{Z}^{3}$ in the $\ell_4$ metric with volume smaller or equal to $1500$, up to congruence, have packing radius in the set $\{1,2,\sqrt[4]{2},2 \sqrt[4]{2},\sqrt[4]{17},\sqrt[4]{33},\sqrt[4]{178}\}$ and are given by the  generator matrices listed in ${\mathcal{B}}$ and by the generator matrices
		\[\tiny{\begin{split}  &  \left\{\left(\begin{array}{ccc} 1 & 1 & 5  \\ 0 & 6 & 6 \\ 0 & 0 & 18 \end{array}\right)\right., \left(\begin{array}{ccc} 1 & 0 & 5  \\ 0 & 1 & 25 \\ 0 & 0 & 123 \end{array}\right), \left(\begin{array}{ccc} 1 & 0 & 5  \\ 0 & 1 & 49 \\ 0 & 0 & 123 \end{array}\right), \left(\begin{array}{ccc} 1 & 0 & 49  \\ 0 & 1 & 59 \\ 0 & 0 & 123 \end{array}\right), \left(\begin{array}{ccc} 1 & 0 & 5  \\ 0 & 1 & 25 \\ 0 & 0 & 124 \end{array}\right),  \\&
		\left(\begin{array}{ccc} 1 & 0  & 7 \\ 0 & 1 & 49 \\ 0 & 0  & 341 \end{array}\right), \left(\begin{array}{ccc} 1 & 0  & 7 \\ 0 & 1 & 146 \\ 0 & 0  & 341 \end{array}\right), \left(\begin{array}{ccc} 1 & 0  & 346 \\ 0 & 1 & 167 \\ 0 & 0  & 341 \end{array}\right), 	\left.\left(\begin{array}{ccc} 1 & 0  & 7 \\ 0 & 1 & 49 \\ 0 & 0  & 342 \end{array}\right)\right\}.  \end{split}}\]
	\end{proposition}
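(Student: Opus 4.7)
The plan is to justify the stated ceiling on the determinant and then exhaustively apply Algorithm \ref{algo_quasiperfect} to every lattice $\Lambda \subseteq \mathbb{Z}^3$ with $\det \Lambda \leq 1500$, mirroring the argument used for Proposition \ref{32}. Since no sharp value of $\overline{\Theta_4^3}$ is available to sharpen Inequality (\ref{7}) into an effective, smaller ceiling as in the $\ell_2$ case, the bound $1500$ is imposed a priori and the statement is understood as a classification up to that volume.

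First I would enumerate one representative of every congruence class. Every full-rank sublattice of $\mathbb{Z}^3$ with determinant $M$ admits a unique upper-triangular generator matrix in Hermite Normal Form, giving a finite list of candidates for each $M$; summing over $M \leq 1500$ yields a sizeable but tractable collection. I would then quotient by the action of signed coordinate permutations, which is precisely the congruence relation in the $\ell_4$ metric, keeping one representative per orbit.

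For each such representative $\Lambda$ with generator matrix $A$ and adjoint $B$, I would run the Injectivity Test described in Section \ref{Algorithm}: compute $\xx B \pmod{M}$ for each $\xx \in B_4^3(r_p)$, where $r_p$ is the largest element of $\mathcal{D}_{4,3}$ for which the ball has cardinality at most $M$, and check that the resulting residues are pairwise distinct. By Theorem \ref{thm:Homomorphism} applied with $\pp = B_4^3(r_p)$, a positive outcome is equivalent to the balls of radius $r_p$ centered at points of $\Lambda$ forming a packing. Each lattice that passes would then be submitted to the Covering Test at the next value $R_p \in \mathcal{D}_{4,3}$, and retained iff $\bigcup_{\cc \in \Lambda} \cc + B_4^3(R_p) = \mathbb{Z}^3$. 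The lattices that survive both tests, together with those already identified in $\mathcal{B}$ in the context of Proposition \ref{32}, constitute the claimed list.

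The main obstacle is purely computational: as $M$ approaches $1500$ both the number of Hermite representatives and the cardinality of $B_4^3(r_p)$ grow appreciably, so the correctness of the enumeration ultimately rests on careful bookkeeping and a faithful implementation of the two tests rather than on any additional mathematical idea beyond those already supplied in the paper. A secondary concern is accurately describing $\mathcal{D}_{4,3}$ near the relevant radii: since fourth-power sums of three integers can coincide or cluster, one must carefully identify the correct successor of $r_p$ in $\mathcal{D}_{4,3}$ before the Covering Test can be invoked.
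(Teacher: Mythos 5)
Your proposal matches the paper's treatment: the paper offers no derivation of a volume bound for $p=4$ in dimension $3$ and simply states that Algorithm \ref{algo_quasiperfect} was run over all lattices of volume at most $1500$, exactly the a priori ceiling and exhaustive computational search you describe. Your added detail on Hermite Normal Form enumeration and on identifying the successor radius in $\mathcal{D}_{4,3}$ merely fills in implementation steps the paper leaves implicit, so the two arguments are essentially the same.
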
 

We conjecture that the lattices listed in Propositions \ref{23}, \ref{24}, \ref{33} and \ref{34} are the only quasi-perfect codes, up to congruence, for  $n=2, 3$ and $p = 3,4$. 

\bibliography{biblio}
\bibliographystyle{plain}

\end{document}